\def \v#1{{\bm #1}}
\def \be {\begin{equation}}
\def \ee {\end{equation}}
\newcommand{\Exp}[1]{\,\mathrm{e}^{\mbox{\footnotesize$#1$}}}
\newcommand{\I}{\mathrm{i}}
\newcommand{\tr}[1]{\mathrm{tr}\left(#1\right)}
\newcommand{\Tr}[1]{\mathrm{Tr}\left\{#1\right\}}
\def \Re{\mathrm{Re}\,}
\def \Im{\mathrm{Im}\,}
\newcommand{\sldin}[2]{\langle#1,#2\rangle_{\rho_{\theta}}^{S}}
\newcommand{\rldin}[2]{\langle#1,#2\rangle_{\rho_{\theta}}^{R}}
\def \del{\partial}
\def \lra{\Leftrightarrow}
\def \cD{{\cal D}}
\def \cM{{\cal M}}
\def \cP{{\cal P}}
\def \cH{{\cal H}}
\def \cX{{\cal X}}
\def \sofh{{\cal S}({\cal H})}
\def \barsofh{\overline{{\cal S}}({\cal H})}
\def \bbr{{\mathbb R}}
\def \bbc{{\mathbb C}}
\def \sofc2{{\cal S}({\mathbb C}^2)}
\def \lofh{{\cal L}({\cal H})}
\def \lofhh{{\cal L}_h({\cal H})}
\def \Dop#1{{\cal D}_{\rho_\theta}(#1)}
\def \cK{{\cal K}}
\newcommand \rspan[1]{\mathrm{span}_\bbr\{#1\}}
\newcommand \cspan[1]{\mathrm{span}_\bbc\{#1\}}
\def \QmetrologyRef{glm11,ta14,ddjk15,ps14,sbd16,drc17}
\def \CstatRef{rao,lc98,kiefer,wasserman,pazman}
\def \QlanRef{HM08,GK06,KG09,YFG13,YF17,YCH18}
\newtheorem{theorem}{Theorem}[section]
\newtheorem{lemma}[theorem]{Lemma}
\newtheorem{corollary}[theorem]{Corollary}
\newtheorem{definition}[theorem]{Definition}
\newenvironment{proof}[1][Proof:]{\begin{trivlist}
\item[\hskip \labelsep {\bfseries #1}]}{\end{trivlist}}
\newcommand{\qed}{\nobreak \ifvmode \relax \else
      \ifdim\lastskip<1.5em \hskip-\lastskip
      \hskip1.5em plus0em minus0.5em \fi \nobreak
      \vrule height0.75em width0.5em depth0.25em\fi}
\definecolor{dgreen}{RGB}{0,150,0}
\newcommand{\blue}[1]{\textcolor{blue}{#1}}
\newcommand{\red}[1]{\textcolor{red}{#1}}
\newcommand{\dgreen}[1]{\textcolor{dgreen}{#1}}
\definecolor{mygray}{gray}{0.6}
\begin{document}
\title{Classification and characterization of quantum parametric models in quantum estimation theory}
\author{Jun Suzuki\\
junsuzuki@uec.ac.jp}
\date{\today}
\affiliation{
Graduate School of Informatics and Engineering, The University of Electro-Communications,\\
1-5-1 Chofugaoka, Chofu-shi, Tokyo, 182-8585 Japan
}

\begin{abstract}
In this paper, we characterize quantum parametric models into different classes 
based on the estimation error bound, known as the Holevo bound. 
These classes are given by the classical, quasi-classical, D-invariant, and asymptotically classical models. 
We first explore the relationships among these four models and show that: 
i) The classical model having the diagonal elements only is characterized by 
the intersection of the D-invariant and asymptotically classical models. 
ii) There exists a gap between the classical model and the quasi-classical model, 
where all logarithmic derivative operators commute with each other. 
Further, we characterize each class with several equivalent conditions. 
This result then reveals the geometrical understanding of quantum statistical models. 
\end{abstract}


\maketitle

\section{Introduction}
Model classification is an important subject in practice and has been studied extensively in statistics. 
For example, if we know that obtained data are drawn according to a particular good class 
of statistical models, well-established estimation methods for this class can be applied to 
make estimates on the statistical model. 
There are, of course, bad statistical models in the sense that 
it is extremely hard to make any statistical estimates even numerically. 
In classical statistics, there exists a variety of different parametric models studied in details.
As a concrete example of a good statistical model, if we know that data is described by 
the standard linear response model with an equal variance, 
we can immediately apply the best linear unbiased estimator, which can be 
computed analytically. In reality, experimental data are affected by 
many unknown factors and considerable amounts of efforts haven devoted to study the 
general non-linear response models in statistics, see for example Refs.~\cite{\CstatRef}. 

Information geometry offers a completely different motivation to 
the model classification problem based on the geometrical properties of parametric models \cite{ANbook}. 
The most famous model is the exponential family, or the log-linear model, defined as an auto-parallel 
sub-manifold with respect to the exponential connection. What is remarkable regarding 
the exponential family is that achievability of the 
Cram\'er-Rao (CR) bound for the finite sample is given as 
if and only if the parametric model is the exponential 
family and the parameter to be estimated is an m-affine parameter of the model. 

The non-commutative extension of classical statistics to a quantum system 
was initiated in 1960s by Helstrom \cite{helstrom} and has been one of the fundamental 
problems in quantum information theory until today. 
The point-estimation problem about quantum states is 
the fundamental problem in theory and is also important for practical applications. 
In particular, recent advances in quantum metrology, quantum sensing, and quantum imaging, 
i.e., high precision measurement methods utilizing quantum resources, 
has triggered many activities in the field, see reviews on these subjects \cite{\QmetrologyRef}. 
Despite these efforts in past, there exist many open problems regarding 
multi-parameter estimation problems. One such fundamental problem is 
an explicit expression for the optimal estimation strategy that sets 
the bound for the estimation error. 
In classical statistics, this estimation error is bounded by 
the well-known CR bound, and an optimal estimator is 
the maximum likelihood estimator that asymptotically achieves this bound. 
Importantly, the CR bound in statistics is analytically calculated by the classical Fisher information matrix 
of the statistical model. 
A quantum version of this result is still missing mainly due to a nontrivial optimization 
for the measurement degrees of freedom, and also partially due to the fact that 
there exist many quantum versions of the Fisher information in the quantum system. 
In particular, quantum CR bounds, which are defined by quantum Fisher information matrices, 
cannot be achieved even asymptotically in general. 
A unified understanding on this fundamental estimation error bound is given by the Holevo bound \cite{holevo}. 
Unlike as in the classical case, this bound is expressed as a non-linear optimization problem. 

Model classification for the quantum parametric models is also an important problem, 
but it seems that this problem is less attracted so far by the quantum information community. 
The first attempt of model classification for the quantum estimation theory 
was due to Holevo \cite{holevo}. He introduced a particular class of quantum statistical models, 
called a {\it D-invariant} model, and showed that the right logarithmic derivative (RLD) CR 
bound can be achieved by the D-invariant model. Another non-trivial extension of 
model classification was studied by Nagaoka \cite{nagaoka89-2}. He defined the quantum exponential 
family, and showed that the symmetric logarithmic derivative (SLD) CR bound 
can be achieved uniformly by the quantum exponential family. 

In this paper, we make an attempt at classifying quantum parametric models based on the 
ultimate precision bound, i.e. the Holevo bound \cite{holevo}. One of the advantages of this approach is 
that we can immediately write down the achievable precision bound if a given model belongs to 
classes of models studied in this paper. The current paper is based on the results presented 
in Ref.~\cite{jsJMP}, where we analyzed the structure of the Holevo bound in detail for a qubit system. 
We derived an explicit formula for any qubit model together with characterization of special classes of the qubit models. 
We also classified the D-invariant model for the general qudit model together with non-trivial characterization of this model. 
In this paper, we continue to explore possible classification of quantum parametric models into several classes in which 
the Holevo bound can be expressed in closed formulas. We are also motivated by analyzing the structure of 
the tangent space and several quantum metrics on the quantum-state space. 

In this paper, we consider four different classes: 
The first class is the classical model where a quantum statistical model is reduced to a parametric model in classical statistics.  
This is because quantum statistical models defined by a set of positive semi-definite matrices with a unit trace 
contain classical statistical models as a special case. 
More precisely, when a given family of quantum states is simultaneously diagonalizable 
for all parameter values, the problem at hand can be reduced to the one in classical statistics. 
Though, this definition is trivial, it is important to characterize such the classical statistical model 
as properties of the tangent space. This is because the local property plays an important role in 
the quantum estimation theory in general. 
For the classical model, it is easy to see that the Holevo bound is simply reduced 
to the form of the classical Fisher information 
computed by the eigenvalues of the given quantum state. 

The second class is known as the quasi-classical model defined by the condition 
imposing all symmetric logarithmic derivative (SLD) operators commute with each other. 
When this condition is satisfied, it is clear that we can construct an optimal measurement 
by diagonalizing all SLD operators simultaneously. This then achieves the SLD CR bound 
for any finite sample size. As indicated by the name, this class is still quantum 
and is different from the classical model in general. 

The third class is known as the D-invariant model introduced by Holevo \cite{holevo}. 
It was shown in Ref.~\cite{jsJMP} that the Holevo bound is equivalent 
to the right logarithmic derivative (RLD) CR bound if and only if the model is D-invariant. 
In this paper, we put further step into characterizing the D-invariant model. 

The fourth class is when the Holevo bound coincides with the symmetric logarithmic derivative (SLD) CR bound. 
We call this class of models as the asymptotically classical in the sense that the model is asymptotically 
equivalent to a classical gaussian model in the local asymptotic normality (LAN) theory \cite{\QlanRef}. 
We note that the asymptotically classical model was introduced and analyzed in Refs.~\cite{jsJMP,jsqit32}. 
In Ref.~\cite{RJDD16}, the authors independently investigated the same problem and they called 
this condition as the compatibility condition. In this paper we give a more detailed analysis on 
their compatibility condition and derive several equivalent characterization of the asymptotically classical model. 

The aim of this paper is not just to classify quantum models into classes mentioned above, 
but to derive several equivalent conditions characterizing each class for the parametric family of quantum states. 
The results are given by theorems in Sec.~\ref{sec4}. We further examine relations among these classes. 
In Fig.~\ref{fig1}, we summarize the relations among four different classes of quantum statistical models. 
Figure \ref{fig2} in Sec.~\ref{sec4} also represents a schematic diagram for one of the main results of this paper. 

\begin{figure}
\setlength{\unitlength}{1mm}
\begin{picture}(80,45)(-5,0)
\linethickness{0.4pt}
\put(0,5){\line(1,0){70}}
\put(0,45){\line(1,0){70}}
\put(70,5){\line(0,1){40}}
\put(0,5){\line(0,1){40}}
\put(24,25){\blue{\oval(30,30)}}
\put(45,25.3){\red{\oval(35,24.3)}}
\put(39.3,25.3){\dgreen{\oval(23.5,24.3)}}
\put(5,40){\makebox(0,0)[c]{{\Large$\cM$}}}
\put(22,23.5){\makebox(0,0)[c]{\blue{$\cM_{D}$}}}
\put(32.5,23.5){\makebox(0,0)[c]{$\cM_{C}$}}
\put(43.5,23.5){\makebox(0,0)[c]{\dgreen{$\cM_{QC}$}}}
\put(57,23.5){\makebox(0,0)[c]{\red{$\cM_{AC}$}}}
\linethickness{0.06pt}
\put(12,11.3){\blue{\line(0,1){27.5}}}
\put(15,10){\blue{\line(0,1){30}}}
\put(18,10){\blue{\line(0,1){30}}}
\put(21,10){\blue{\line(0,1){30}}}
\put(24,10){\blue{\line(0,1){30}}}
\put(27,10){\blue{\line(0,1){30}}}
\put(30,10){\blue{\line(0,1){30}}}
\put(33,10){\blue{\line(0,1){30}}}
\put(36,11.3){\blue{\line(0,1){27.5}}}
\put(28,17.5){\red{\line(1,0){34}}}
\put(27.5,21.5){\red{\line(1,0){35}}}
\put(27.5,25.5){\red{\line(1,0){35}}}
\put(27.5,29.5){\red{\line(1,0){35}}}
\put(28.2,33.5){\red{\line(1,0){33.6}}}
\put(27.6,28){\dgreen{\line(1,1){9.2}}}
\put(27.6,22){\dgreen{\line(1,1){15.2}}}
\put(28.7,16){\dgreen{\line(1,1){19.7}}}
\put(33,13.1){\dgreen{\line(1,1){18}}}
\put(40,13.1){\dgreen{\line(1,1){11.2}}}
\end{picture}
\caption{A schematic diagram for model classification of quantum parametric models. 
A generic quantum parametric model $\cM$ is indicated by the rectangular box. 
A blue vertically shadowed area represents the D-invariant model. 
A red horizontally shadowed area does the asymptotically classical model. 
A green diagonally shadowed area does the quasi classical model. 
The intersection of the D-invariant model and the asymptotically classical 
model represents the classical model. 
}
\label{fig1}
\end{figure}
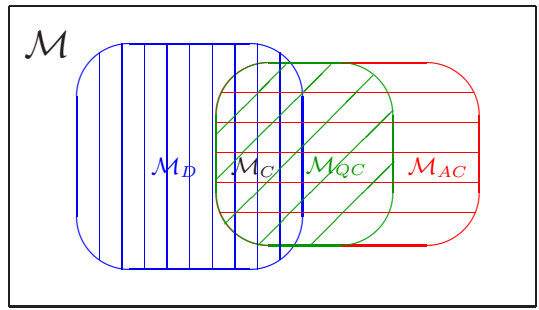

The content of this paper is summarized as follows. 
Sec.~\ref{sec2} provides preliminaries for notations and mathematical tools used in this paper. 
In Sec.~\ref{sec2-3}, a few lemmas are proven to be useful for classifying quantum statistical models. 
In Sec.~\ref{sec3}, we list the definitions of four different classes of statistical models. 
Our main results are given in the next section. Sec.~\ref{sec4-1} gives the main theorems of this paper. 
In Sec.~\ref{sec4-2}, we discuss the meaning of the classical model in detail. 
Proofs for the theorems are given in Sec.~\ref{sec4-3}. 
Several examples are discussed in Sec.~\ref{sec5} to illustrate our findings. 
The last section, Sec.~\ref{sec5}, concludes the paper with a few remarks and open problems. 

\section{Preliminaries}\label{sec2}
A {\it quantum system} $\cH$ is a $d$-dimensional Hilbert space on the complex number. 
Denote by $\lofh$ a set of (bounded) linear operators from $\cH$ to itself, 
and by $\lofhh$ a set of linear and hermite operators from $\cH$ to itself. 
A {\it quantum state} is a positive semi-definite operator on $\cH$ with unit trace. 
Let us denote a set of all quantum states on $\cH$ by $\barsofh$ and 
all full-ranked quantum states by $\sofh$. 
A {\it quantum statistical model} is defined by a parametric family of quantum states 
\be\label{qmodel}
\cM:=\{\rho_\theta\in\sofh\,|\,\theta=(\theta^1,\dots,\theta^n)\in\Theta\},
\ee
where $\Theta$ is an open subset of $\bbr^n$. 
As in classical statistics, we impose several regularity conditions, 
such as one-to-one smooth mapping; $\theta\mapsto\rho_\theta$, differentiability, 
linearly independence of partial derivatives $\del \rho_\theta/\del\theta^i$ with respect to 
the coordinates $(\theta^i)$, non-degeneracy for the eigenvalues, and so on. 
In the following discussions, we assume all these regularity conditions to avoid non-regular 
behaviors of the statistical model. 
In particular, we mainly consider a quantum statistical model of full-rank states unless stated explicitly. 

\subsection{Tangent space and quantum Fisher information}\label{sec2-1}
We define two quantum versions of the logarithmic derivative, the quantum score functions, as follows. 
For a given quantum state $\rho_{\theta}$ and any operators $X,Y\in\lofh$, 
define the {\it symmetric logarithmic derivative} (SLD) and 
{\it right logarithmic derivative} (RLD) inner product by 
\begin{align} \nonumber
\sldin{X}{Y}&:=\frac12\tr{\rho_{\theta}(YX^\dagger+X^\dagger Y)},\\
\rldin{X}{Y}&:=\tr{\rho_{\theta}YX^\dagger}, 
\end{align}
respectively, where $X^\dagger$ denotes the hermite conjugate of $X$. 
The $i$th SLD and RLD operators, 
$L_i$ and $\tilde{L}_i$, are formally defined by the solutions to the operator equations:
\begin{align}\nonumber
\del_i\rho_{\theta}&=\frac12 (\rho_{\theta}L_{\theta,i}+L_{\theta,i}\rho_{\theta}), \\ \label{sldrldop}
\del_i\rho_{\theta}&=\rho_{\theta}\tilde{L}_{\theta,i}. 
\end{align}
for $i=1,2,\dots,n$, where $\del_i:=\frac{\del}{\del\theta^i}$ denotes the partial derivative with respect to $\theta^i$. 
It is not difficult to see that the SLD operators are hermite, whereas 
RLD operators are not in general. 

The SLD and RLD Fisher information matrices (quantum Fisher metric) are defined by 
\begin{align} \nonumber
G_{\theta}&:= \left[ g_{\theta, ij}\right] \mbox{ with }
g_{\theta, ij}:=\sldin{L_{\theta,i}}{L_{\theta,j}}, \\
\tilde{G}_{\theta}&:= \left[ \tilde{g}_{\theta, ij} \right]\mbox{ with }
\tilde{g}_{\theta, ij}:=\rldin{\tilde{L}_{\theta,i}}{\tilde{L}_{\theta,j}}, 
\end{align}
respectively. 
It is known that the SLD Fisher information is the smallest and the real part of RLD is 
the largest operator monotone metrics on the quantum state space \cite{petz}. 

The SLD tangent space is define by the linear span of SLD operators:
\be
T_\theta(\cM):=\rspan{L_{\theta,i}}\subset\lofhh,
\ee
and the RLD tangent space is defined by the linear span of RLD operators 
with complex coefficients: 
\be
\tilde{T}_\theta(\cM):=\cspan{\tilde{L}_{\theta,i}}\subset\lofh. 
\ee
Let $G_\theta^{-1}=[g_\theta^{ij}]$ be the inverse of the SLD Fisher information 
and $\tilde{G}_\theta^{-1}=[\tilde{g}_\theta^{ij}]$ be the inverse for the RLD case. 
It is convenient to introduce the following linear combinations of 
the logarithmic derivative operators 
\[ 
L_{\theta}^i:= \sum_{j=1}^n g_\theta^{ji}{L}_{\theta,j},\ 
\tilde{L}_{\theta}^i:=\sum_{j=1}^n\tilde{g}_\theta^{ji}\tilde{L}_{\theta,j} . 
\]
By definitions, $\{L_{\theta}^i\}$ forms a dual basis for the inner product space $\sldin{\cdot}{\cdot}$;  
$\sldin{L_{\theta}^i}{L_{\theta,j}}=\delta^i_j$, and we shall call it the SLD dual operator.
The same statement holds for the RLD case. 

Noting that the SLD and RLD operators are a sort of exponential representation of 
the tangent vector $\del_i$, we can show the next lemma. 
\begin{lemma}\label{lem1} 
For $\forall X\in\lofh$, and $\forall f\in C^\infty(\bbr)$, the following holds. 
\be 
\sldin{f(L_{\theta,i})}{X}=\rldin{f(\tilde{L}_{\theta,i})}{X}, \label{prop2}\\
\ee
\end{lemma}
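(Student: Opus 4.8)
The plan is to convert the asserted scalar identity into an operator identity and then strip away the dependence on $f$ by reducing to the generating case $f(x)=x$. First I would write out both sides directly from the definitions. Since $L_{\theta,i}$ is hermite and $f$ is real-valued, $f(L_{\theta,i})$ is again hermite, so by cyclicity of the trace
$\sldin{f(L_{\theta,i})}{X}=\frac12\tr{\rho_\theta\big(Xf(L_{\theta,i})+f(L_{\theta,i})X\big)}=\tr{X\,\tfrac12\big(f(L_{\theta,i})\rho_\theta+\rho_\theta f(L_{\theta,i})\big)}$,
while $\rldin{f(\tilde{L}_{\theta,i})}{X}=\tr{\rho_\theta X f(\tilde{L}_{\theta,i})^\dagger}=\tr{X\,f(\tilde{L}_{\theta,i})^\dagger\rho_\theta}$. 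Because the statement must hold for every $X\in\lofh$ and the trace pairing $\tr{X\,\cdot}$ is non-degenerate on $\lofh$, the lemma is equivalent to the single operator identity $\tfrac12\big(f(L_{\theta,i})\rho_\theta+\rho_\theta f(L_{\theta,i})\big)=f(\tilde{L}_{\theta,i})^\dagger\rho_\theta$.

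Next I would dispose of the generating case and set up an induction. Equating the two defining relations in (\ref{sldrldop}) gives $\rho_\theta\tilde{L}_{\theta,i}=\tfrac12(\rho_\theta L_{\theta,i}+L_{\theta,i}\rho_\theta)=\del_i\rho_\theta$, hence $\tilde{L}_{\theta,i}=\tfrac12\big(L_{\theta,i}+\rho_\theta^{-1}L_{\theta,i}\rho_\theta\big)$ for full-rank $\rho_\theta$, and for $f(x)=x$ both sides of the operator identity reduce to $\del_i\rho_\theta$ (the constant case being immediate). Multiplying the operator identity on the right by $\rho_\theta^{-1}$ recasts it as $f(\tilde{L}_{\theta,i})^\dagger=\Phi\big(f(L_{\theta,i})\big)$, where $\Phi(A):=\tfrac12\big(A+\rho_\theta A\rho_\theta^{-1}\big)$ is the linear superoperator that, by the $f(x)=x$ case, sends $L_{\theta,i}$ to $\tilde{L}_{\theta,i}^\dagger$. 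By linearity it then suffices to verify $\Phi\big(L_{\theta,i}^{\,k}\big)=\big(\tilde{L}_{\theta,i}^\dagger\big)^{k}$ for every power $k$, and to pass to general $f\in C^\infty(\bbr)$ by a Taylor/Weierstrass approximation combined with the spectral decomposition of the hermite operator $L_{\theta,i}$.

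The main obstacle is exactly this last step: showing that $\Phi$ commutes with the functional calculus, i.e. that the first-order correspondence $L_{\theta,i}\mapsto\tilde{L}_{\theta,i}^\dagger$ is carried across products. Since $\Phi$ is the average of the identity and a $\rho_\theta$-similarity, it is not an algebra homomorphism on $\lofh$, so $\Phi\big(L_{\theta,i}^{\,k}\big)=\Phi\big(L_{\theta,i}\big)^{k}$ cannot be read off from multiplicativity, and this is precisely where the exponential-representation viewpoint must do the work. Concretely I would pass to the eigenbasis of $\rho_\theta$, where $\Phi$ acts on each matrix entry by a scalar weight determined by the corresponding eigenvalue pair, and attempt to show that the spectral data of $L_{\theta,i}$ and $\tilde{L}_{\theta,i}$ are tied together tightly enough by (\ref{sldrldop}) that $f$ transports across; alternatively one represents $f(L_{\theta,i})$ through a resolvent or Fourier integral and commutes $\Phi$ past the generator $L_{\theta,i}$ order by order. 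I expect the bookkeeping of these non-commuting factors to be the delicate part of the argument, and it is the point at which any structure beyond the first-order matching encoded in (\ref{sldrldop}) would have to be invoked.
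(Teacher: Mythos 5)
Your reduction of the lemma to the operator identity $f(\tilde L_{\theta,i})^\dagger\rho_\theta=\tfrac12\big(f(L_{\theta,i})\rho_\theta+\rho_\theta f(L_{\theta,i})\big)$, i.e. $f(\tilde L_{\theta,i})^\dagger=\Phi\big(f(L_{\theta,i})\big)$ with $\Phi(A)=\tfrac12\big(A+\rho_\theta A\rho_\theta^{-1}\big)$, is correct, and you have put your finger on exactly the right obstacle: $\Phi$ is not multiplicative, so nothing forces $\Phi\big(L_{\theta,i}^{\,k}\big)=\big(\tilde L_{\theta,i}^\dagger\big)^k$ for $k\ge2$. But you should be aware that this obstacle is not delicate bookkeeping to be overcome by spectral or resolvent manipulations --- it is fatal, because the identity you would need is false already at $k=2$, and with it the lemma as stated. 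Concretely, take a one-parameter qubit model with $\rho_\theta=\mathrm{diag}(p,1-p)$ and $\del_1\rho_\theta=b\sigma_1$ at the point of interest ($b\neq0$ real, $p\neq1/2$). Then $L_{\theta,1}=2b\sigma_1$, while $\tilde L_{\theta,1}=\rho_\theta^{-1}\del_1\rho_\theta$ has off-diagonal entries $b/p$ and $b/(1-p)$, so that $\tilde L_{\theta,1}^{\,2}=\frac{b^2}{p(1-p)}I$. For $f(x)=x^2$ and $X=I$ one gets $\sldin{f(L_{\theta,1})}{I}=\tr{\rho_\theta L_{\theta,1}^{\,2}}=4b^2$ but $\rldin{f(\tilde L_{\theta,1})}{I}=\tr{\rho_\theta(\tilde L_{\theta,1}^{\,2})^\dagger}=b^2/\big(p(1-p)\big)$, and these agree only when $p=1/2$. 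So the final step of your plan cannot be completed for any nonlinear $f$.

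For what it is worth, the paper's own proof has the identical gap: it claims that ``repeated applications'' of the $k=1$ identity $\sldin{L_{\theta,i}}{X}=\rldin{\tilde L_{\theta,i}}{X}=\tr{\del_i\rho_\theta X}$ yield $\sldin{L_{\theta,i}^{\,k}}{X}=\rldin{\tilde L_{\theta,i}^{\,k}}{X}$, but no such recursion exists (attempting $\sldin{L^2}{X}=\sldin{L}{\tfrac12(XL+LX)}$ produces the uncontrolled cross term $\tr{\rho_\theta LXL}$), and the counterexample above shows no repair is possible. What is true, and what is the only instance the paper ever invokes (Lemmas~\ref{lem3} and \ref{lem4} use $f(x)=x$ only), is the affine case: for $f(x)=\alpha x+\beta$ both sides equal $\alpha\tr{\del_i\rho_\theta X}+\beta\tr{\rho_\theta X}$. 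If you restrict the statement to affine $f$, your argument --- the $k=1$ operator identity from Eq.~\eqref{sldrldop} together with linearity of $\Phi$ --- is already a complete proof, and in that restricted form it coincides with what the paper actually establishes and uses.
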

\begin{proof}
We note that the definitions of logarithmic derivative operators gives 
\be
\sldin{L_{\theta,i}}{X}=\rldin{\tilde{L}_{\theta,i}}{X}=\tr{\del_i\rho_\theta X}, 
\ee
and repeated applications of this relation proves
\be
\sldin{(L_{\theta,i})^k}{X}=\rldin{(\tilde{L}_{\theta,i})^k}{X}, 
\ee
for any integer power $k$. It is then easy to prove Eq.~\eqref{prop2}. 
$\square$
\end{proof}

\subsection{Commutation operator}\label{sec2-2}
For a given quantum statistical model \eqref{qmodel}, we introduce 
a super-operator $\cD$ from $\lofh$ to itself, whose action on $X\in\lofh$ 
is defined by the operator equation: 
\be\label{defDop}
[\rho_\theta\,,\,X]:=\rho_\theta X-X\rho_\theta=\I\rho_\theta\Dop{X}+\I\Dop{X}\rho_\theta.  
\ee
The operator $\cD_{\rho_\theta}$, called a {\it commutation operator}, was introduced by Holevo, 
and the detail can be found in his book \cite{holevo}. 
By definition, we can check that the operator $\cD_{\rho_\theta}$ is linear. 
Denoting the identity operator $I$, the following relationship holds 
\be
L_{\theta,i}=(I+\I\cD_{\rho_\theta})(\tilde{L}_{\theta,i}), \label{prop1}
\ee
which can be proven by the direct calculation. 

The properties useful in our discussion are given in the next lemma. 
\begin{lemma}\label{lem2} 
For $\forall X,Y\in\lofh$, the following relations hold. 
\begin{align} 
\sldin{\Dop{X}}{Y}&=-\sldin{{X}}{\Dop{Y}}, \label{prop3}\\
\rldin{\Dop{X}}{Y}&=-\rldin{{X}}{\Dop{Y}} . \label{prop4}
\end{align} 
\end{lemma}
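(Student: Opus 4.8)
The plan is to recognize that Eqs.~\eqref{prop3} and \eqref{prop4} simply assert that the commutation operator $\cD_{\rho_\theta}$ is anti-self-adjoint with respect to both the SLD and RLD inner products. The natural starting point is to rewrite the defining equation \eqref{defDop} in anticommutator form, $\rho_\theta\Dop{X}+\Dop{X}\rho_\theta=-\I[\rho_\theta,X]$, and to record its Hermitian conjugate, $\rho_\theta\Dop{X}^\dagger+\Dop{X}^\dagger\rho_\theta=-\I[\rho_\theta,X^\dagger]$, which holds because $\rho_\theta$ is hermite. The guiding idea is to maneuver the action of $\cD_{\rho_\theta}$ onto the factor that is flanked by $\rho_\theta$ on both sides, so that this identity trades the anticommutator $\{\rho_\theta,\Dop{\cdot}\}$ for a plain commutator with $\rho_\theta$; the rest then collapses under cyclicity of the trace.

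For the SLD relation \eqref{prop3}, I would first regroup the inner product as $\sldin{A}{B}=\frac12\tr{(\rho_\theta A^\dagger+A^\dagger\rho_\theta)B}$. Setting $A=\Dop{X}$, $B=Y$, the bracket is exactly $\rho_\theta\Dop{X}^\dagger+\Dop{X}^\dagger\rho_\theta=-\I[\rho_\theta,X^\dagger]$, giving $\sldin{\Dop{X}}{Y}=-\frac{\I}{2}\tr{\rho_\theta[X^\dagger,Y]}$. Performing the mirror computation for $-\sldin{X}{\Dop{Y}}$, now grouping $\Dop{Y}$ with the two copies of $\rho_\theta$ and invoking the same identity, returns the identical expression $-\frac{\I}{2}\tr{\rho_\theta[X^\dagger,Y]}$, which establishes \eqref{prop3}.

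The hard part will be the RLD relation \eqref{prop4}: since $\rldin{\cdot}{\cdot}$ is not symmetric, one cannot sandwich $\cD_{\rho_\theta}$ symmetrically between two copies of $\rho_\theta$, and the defining equation constrains only the \emph{anticommutator} of $\Dop{\cdot}$ with $\rho_\theta$, not its one-sided product with $\rho_\theta$. I see two ways past this. The coordinate-free route is to first prove, by the very same anticommutator manipulation, the bridge identity $\rldin{X}{Y}=\sldin{X}{Y}+\I\sldin{X}{\Dop{Y}}$ connecting the two inner products (it echoes Eq.~\eqref{prop1}). Granting it, \eqref{prop4} follows from \eqref{prop3} used twice: $\rldin{\Dop{X}}{Y}=\sldin{\Dop{X}}{Y}+\I\sldin{\Dop{X}}{\Dop{Y}}=-\sldin{X}{\Dop{Y}}-\I\sldin{X}{\Dop{\Dop{Y}}}=-\rldin{X}{\Dop{Y}}$, the last step being the bridge identity with $Y$ replaced by $\Dop{Y}$.

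The route I would actually present as primary, because it settles both relations uniformly and most transparently, is the spectral one, using the full-rank assumption to write $\rho_\theta=\sum_k p_k\ket{k}\bra{k}$. In this eigenbasis the defining equation yields $\Dop{X}_{kl}=-\I\,c_{kl}X_{kl}$ with $c_{kl}=(p_k-p_l)/(p_k+p_l)\in\bbr$, while the two inner products become $\sldin{X}{Y}=\frac12\sum_{k,l}(p_k+p_l)\overline{X_{kl}}Y_{kl}$ and $\rldin{X}{Y}=\sum_{k,l}p_k\overline{X_{kl}}Y_{kl}$. Because the factor $-\I\,c_{kl}$ is purely imaginary, acting with $\cD_{\rho_\theta}$ in the conjugated slot (where $X$ appears as $X^\dagger$) contributes $+\I\,c_{kl}$, whereas acting in the unconjugated slot contributes $-\I\,c_{kl}$; this single sign flip is precisely the antisymmetry asserted in \eqref{prop3} and \eqref{prop4}. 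I would then note the trace-based derivation above as the basis-free alternative, since it stays within the manipulative style already used for Lemma~\ref{lem1}.
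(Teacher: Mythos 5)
Your proposal is correct. For the SLD relation \eqref{prop3} your argument coincides with the paper's: cycle the trace so that the anticommutator $\rho_\theta\Dop{X}+\Dop{X}\rho_\theta$ appears, replace it by $-\I[\rho_\theta,X]$ using the defining equation \eqref{defDop}, and use $\tr{[\rho_\theta,X]Y}=-\tr{[\rho_\theta,Y]X}$; you are in fact more careful than the paper, which silently drops the daggers and thereby writes the computation as if $X,Y$ were hermite. Where you genuinely diverge is the RLD relation \eqref{prop4}: the paper dismisses it with ``can be proven similarly,'' whereas you correctly point out that cycling $\tr{\rho_\theta Y\Dop{X}^\dagger}$ produces only the one-sided product $\Dop{X}^\dagger\rho_\theta$, which the defining equation (an anticommutator identity) does not control, so the SLD manipulation does not transfer verbatim. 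Both of your repairs are sound: the bridge identity $\rldin{X}{Y}=\sldin{X}{Y}+\I\sldin{X}{\Dop{Y}}$ is exactly the relation the paper itself establishes later inside the proof of Lemma~\ref{lem5}, and applying \eqref{prop3} to the pairs $(X,Y)$ and $(X,\Dop{Y})$ then yields \eqref{prop4}; the spectral route, with $\Dop{X}_{kl}=-\I c_{kl}X_{kl}$ and $c_{kl}=(p_k-p_l)/(p_k+p_l)$ real, settles both relations in one stroke and makes the anti-self-adjointness of $\cD_{\rho_\theta}$ transparent, at the cost of invoking the full-rank eigendecomposition rather than staying basis-free (equivalently, one can phrase it as the commutativity of the left- and right-multiplication superoperators by $\rho_\theta$). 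Either version is acceptable; the bridge-identity route stays closest to the paper's manipulative style, while your spectral presentation is arguably the more illuminating of the two and actually supplies the justification the paper's ``similarly'' leaves implicit.
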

\begin{proof}
The first relationship can be proven directly as
\begin{align*}
2\sldin{\Dop{X}}{Y}&=\tr{\rho_\theta (\Dop{X}Y+Y\Dop{X})}\\
&=\tr{(\rho_\theta\Dop{X}+\Dop{X}\rho_\theta)Y}\\
&= \tr{(-\I)[\rho_\theta,X]Y}\\
&=-\tr{(-\I)[\rho_\theta,Y]X}\\
&=-\tr{(\rho_\theta\Dop{Y}+\Dop{Y}\rho_\theta)X}\\
&=-2\sldin{{X}}{\Dop{Y}}.
\end{align*}

Eq.~\eqref{prop4} can be proven similarly. 
$\square$
\end{proof}

\subsection{Basic lemmas} \label{sec2-3}
In this subsection, we list several lemmas that will be used in our discussion. 
We define two hermite matrices, ${Z}_\theta,\tilde{Z}_\theta$ in terms of SLD and RLD dual operators as follows. 
\begin{align} 
Z_\theta&:=[z_\theta^{ij}],\ \mbox{with } z_\theta^{ij}:=\rldin{L_\theta^i}{L_\theta^j},\\
\tilde{Z}_\theta&:=[\tilde{z}_\theta^{ij}],\ \mbox{with }\tilde{z}_\theta^{ij}:=\sldin{\tilde{L}_\theta^i}{\tilde{L}_\theta^j}. 
\end{align}
By definition, they are complex matrices in general. Hermiteness can be 
checked directly by
\be
(z_\theta^{ij})^*=\tr{(\rho_\theta L_\theta^j {L_\theta^i}^\dagger)^\dagger }=\tr{\rho_\theta L_\theta^i {L_\theta^j} }=z_\theta^{ji}, 
\ee
where $*$ denotes its complex conjugate, and the matrix $\tilde{Z}_\theta$ can be checked similarly. 

Together with the SLD and RLD Fisher information matrices, 
we list four matrices for comparison: 
\begin{align} \label{4matrix}
G_\theta^{-1}&=[g_\theta^{ij}],\ g_\theta^{ij}=\sldin{L_{\theta}^i}{L_{\theta}^j},\\ \nonumber
\tilde{G}_\theta^{-1}&=[\tilde{g}_\theta^{ij}],\ \tilde{g}_\theta^{ij}=\rldin{\tilde{L}_{\theta}^i}{\tilde{L}_{\theta}^j},\\ \nonumber
Z_\theta&=[z_\theta^{ij}],\ z_\theta^{ij}=\rldin{L_\theta^i}{L_\theta^j},\\ \nonumber
\tilde{Z}_\theta&=[\tilde{z}_\theta^{ij}],\ \tilde{z}_\theta^{ij}=\sldin{\tilde{L}_\theta^i}{\tilde{L}_\theta^j}. 
\end{align}
By definition, $\Re (Z_\theta^{-1})=G_\theta$ and $\Re Z_\theta=G_\theta^{-1}$ hold, 
where $\Re X:=(X+X^*)/2$ denotes the real part of $X\in\lofh$ with $X^*$ the complex conjugate of $X$. 

First, it is straightforward to see that the operator $L_\theta^i-\tilde{L}_\theta^i$ has the following property. 
\begin{lemma}\label{lem3}
$L_\theta^i-\tilde{L}_\theta^i$ is orthogonal to the SLD tangent space $T_\theta(\cM)$ 
with respect to $\sldin{\cdot}{\cdot}$, 
and is orthogonal to the RLD tangent space $\tilde{T}_\theta(\cM)$ 
with respect to $\rldin{\cdot}{\cdot}$. 
\end{lemma}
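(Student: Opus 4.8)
The plan is to reduce each of the two orthogonality statements to a finite set of scalar identities and then evaluate those scalars using Lemma~\ref{lem1} together with the two dual-basis relations $\sldin{L_\theta^i}{L_{\theta,j}}=\delta^i_j$ and $\rldin{\tilde{L}_\theta^i}{\tilde{L}_{\theta,j}}=\delta^i_j$. Since $T_\theta(\cM)=\rspan{L_{\theta,k}}$ and $\tilde{T}_\theta(\cM)=\cspan{\tilde{L}_{\theta,k}}$, and both inner products are (sesqui)linear in their arguments, it suffices to test $L_\theta^i-\tilde{L}_\theta^i$ against the basis operators $L_{\theta,k}$ in the SLD inner product and against $\tilde{L}_{\theta,k}$ in the RLD inner product. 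I will also use that both $\sldin{\cdot}{\cdot}$ and $\rldin{\cdot}{\cdot}$ are conjugate-symmetric, so the order of the two arguments is immaterial up to complex conjugation; since all the scalars produced below are real Kronecker deltas, I may place the basis operator in whichever slot is convenient.

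For the SLD statement I would compute $\sldin{L_{\theta,k}}{L_\theta^i-\tilde{L}_\theta^i}$ term by term. The first term is $\sldin{L_{\theta,k}}{L_\theta^i}=\delta^i_k$ directly from the SLD dual-basis relation. For the second term the key step is Lemma~\ref{lem1} with $f$ the identity, which gives $\sldin{L_{\theta,k}}{X}=\rldin{\tilde{L}_{\theta,k}}{X}$ for every $X\in\lofh$; taking $X=\tilde{L}_\theta^i$ converts the SLD inner product into $\rldin{\tilde{L}_{\theta,k}}{\tilde{L}_\theta^i}=\delta^i_k$ by the RLD dual-basis relation. The two contributions cancel, establishing orthogonality to $T_\theta(\cM)$.

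The RLD statement is the mirror image. Here I would evaluate $\rldin{\tilde{L}_{\theta,k}}{L_\theta^i-\tilde{L}_\theta^i}$. The $\tilde{L}_\theta^i$ term is $\rldin{\tilde{L}_{\theta,k}}{\tilde{L}_\theta^i}=\delta^i_k$ from the RLD dual-basis relation, while the $L_\theta^i$ term is handled by reading Lemma~\ref{lem1} in the opposite direction, $\rldin{\tilde{L}_{\theta,k}}{L_\theta^i}=\sldin{L_{\theta,k}}{L_\theta^i}=\delta^i_k$. Again the two terms cancel.

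I do not expect a genuine obstacle: the entire content sits in the identity of Lemma~\ref{lem1}, which expresses the fact that $L_{\theta,k}$ and $\tilde{L}_{\theta,k}$ are two representations of the same tangent vector $\del_k$ and therefore act identically in the pairing $\tr{\del_k\rho_\theta\,\cdot\,}$. The only points requiring care are purely bookkeeping: matching the slot in which the logarithmic-derivative operator sits so that Lemma~\ref{lem1} applies, and tracking the conjugations coming from the non-symmetric definitions of the two inner products. Because both inner products are conjugate-symmetric and the resulting coefficients are real, these subtleties drop out.
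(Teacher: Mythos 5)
Your proof is correct and follows essentially the same route as the paper: test against the basis operators, use the dual-basis relations, and invoke Lemma~\ref{lem1} with $f(x)=x$ to convert between the SLD and RLD pairings so the two Kronecker deltas cancel. The paper writes out only the SLD case and declares the RLD case similar, whereas you spell out both; the slot/conjugate-symmetry bookkeeping you flag is handled the same way (implicitly) in the paper.
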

\begin{proof}
Direct calculation shows
\begin{align*}
\sldin{L_{\theta,j}}{L_\theta^i-\tilde{L}_\theta^i}&=\sldin{L_{\theta,j}}{L_\theta^i}-\sldin{L_{\theta,j}}{\tilde{L}_\theta^i}\\
&=\sldin{L_{\theta,j}}{L_\theta^i}-\rldin{\tilde{L}_{\theta,j}}{\tilde{L}_\theta^i}\\
&=\delta^i_j-\delta^i_j=0,
\end{align*}
where Lemma \ref{lem1} with $f(x)=x$ is used to get the second line. 

Orthogonality to the RLD tangent space with respect to the RLD inner product can be proven similarly. 
$\square$
\end{proof}

The following matrix inequalities between $G_\theta$, $\tilde{G}_\theta$, $Z_\theta=[\rldin{L_\theta^i}{L_\theta^j}]$, 
and $\tilde{Z}_\theta=[\sldin{\tilde{L}_\theta^i}{\tilde{L}_\theta^j}]$ are fundamental. 
\begin{lemma}\label{lem4} Two matrix inequalities 
\begin{align}\nonumber
Z_\theta&\ge \tilde{G}_\theta^{-1},\\
\tilde{Z}_\theta&\ge G_\theta^{-1}, 
\end{align}
hold where the equality conditions are same and is given by $\forall i$, $L_\theta^i-\tilde{L}_\theta^i=0$. 
\end{lemma}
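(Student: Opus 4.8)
The plan is to show that each matrix difference, $Z_\theta-\tilde{G}_\theta^{-1}$ and $\tilde{Z}_\theta-G_\theta^{-1}$, is a Gram matrix built from the single family of operators $\Delta^i:=L_{\theta}^i-\tilde{L}_{\theta}^i$, which is automatically positive semi-definite. This simultaneously yields both inequalities and pins down a common equality condition. The engine is Lemma \ref{lem3}, which states that each $\Delta^i$ is orthogonal to the SLD tangent space under $\sldin{\cdot}{\cdot}$ and to the RLD tangent space under $\rldin{\cdot}{\cdot}$; I will use this to annihilate all cross terms in the expansions below.

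First I would address $Z_\theta\ge\tilde{G}_\theta^{-1}$. Writing $L_{\theta}^i=\tilde{L}_{\theta}^i+\Delta^i$ and expanding the sesquilinear form gives $z_\theta^{ij}=\rldin{L_{\theta}^i}{L_{\theta}^j}=\rldin{\tilde{L}_{\theta}^i}{\tilde{L}_{\theta}^j}+\rldin{\tilde{L}_{\theta}^i}{\Delta^j}+\rldin{\Delta^i}{\tilde{L}_{\theta}^j}+\rldin{\Delta^i}{\Delta^j}$. Since each $\tilde{L}_{\theta}^k$ lies in the RLD tangent space, the two middle terms vanish by Lemma \ref{lem3} (the RLD inner product is conjugate symmetric, so orthogonality applies in either slot). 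Hence $z_\theta^{ij}-\tilde{g}_\theta^{ij}=\rldin{\Delta^i}{\Delta^j}$, and for any $c=(c_i)\in\bbc^n$ the quadratic form collapses to $\sum_{ij}c_i^*\,\rldin{\Delta^i}{\Delta^j}\,c_j=\tr{\rho_\theta W W^\dagger}\ge 0$ with $W:=\sum_j c_j\Delta^j$, establishing $Z_\theta\ge\tilde{G}_\theta^{-1}$.

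The second inequality is entirely parallel: writing $\tilde{L}_{\theta}^i=L_{\theta}^i-\Delta^i$ and expanding $\tilde{z}_\theta^{ij}=\sldin{\tilde{L}_{\theta}^i}{\tilde{L}_{\theta}^j}$, the cross terms $\sldin{L_{\theta}^i}{\Delta^j}$ and $\sldin{\Delta^i}{L_{\theta}^j}$ now vanish because $\Delta^k$ is orthogonal to the SLD tangent space, leaving $\tilde{z}_\theta^{ij}-g_\theta^{ij}=\sldin{\Delta^i}{\Delta^j}$, again a Gram matrix and hence positive semi-definite. For the equality conditions, note the difference matrix is the zero matrix precisely when its diagonal entries vanish, i.e. $\rldin{\Delta^i}{\Delta^i}=0$ (resp. $\sldin{\Delta^i}{\Delta^i}=0$) for every $i$; since $\rho_\theta$ is full-rank these forms are strictly positive unless $\Delta^i=0$, so equality in either inequality is equivalent to $L_{\theta}^i-\tilde{L}_{\theta}^i=0$ for all $i$, the same condition in both cases.

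I expect the only delicate point to be bookkeeping of the sesquilinear conventions: I should verify at the outset that both $\rldin{\cdot}{\cdot}$ and $\sldin{\cdot}{\cdot}$ are conjugate symmetric and nonnegative on the diagonal (a one-line trace computation each), so that orthogonality is slot-independent and the Gram matrices are genuinely positive semi-definite with the correct placement of complex conjugates.
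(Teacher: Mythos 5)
Your proof is correct and takes essentially the same route as the paper's: both identify $Z_\theta-\tilde{G}_\theta^{-1}$ (resp.\ $\tilde{Z}_\theta-G_\theta^{-1}$) with the Gram matrix of the differences $L_\theta^i-\tilde{L}_\theta^i$ by using the orthogonality of Lemma~\ref{lem3} to kill the cross terms, and both read off the common equality condition from the vanishing of that Gram matrix together with the full-rank assumption on $\rho_\theta$. The only cosmetic difference is the direction of the computation: the paper starts from $[\rldin{m_\theta^i}{m_\theta^j}]$ and reduces it to $z_\theta^{ij}-\tilde{g}_\theta^{ij}$ (invoking Lemma~\ref{lem1} along the way), whereas you expand $z_\theta^{ij}$ directly and use the definitional identity $\rldin{\tilde{L}_\theta^i}{\tilde{L}_\theta^j}=\tilde{g}_\theta^{ij}$.
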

\begin{proof}
Let $m_{\theta}^i:=L_\theta^i-\tilde{L}_\theta^i$ and define an $n\times n$ hermite matrix, 
\be
\tilde{M}_\theta:= [\rldin{m_{\theta}^i}{m_{\theta}^j}].
\ee
The matrix $\tilde{M}_\theta$ is then positive semi-definite. 
Using Lemma \ref{lem3}, we can also express matrix elements of $\tilde{M}_\theta$ as
\begin{align*}
\rldin{m_{\theta}^i}{m_{\theta}^j}&=\rldin{m_{\theta}^i}{L_\theta^j}-\rldin{m_{\theta}^i}{\tilde{L}_\theta^j}\\
&=\rldin{L_\theta^i}{L_\theta^j}-\rldin{\tilde{L}_\theta^i}{L_\theta^j}\\
&=z_\theta^{ij} -\sum_k \tilde{g}_\theta^{ik}  \rldin{\tilde{L}_{\theta,k}}{L_\theta^j}\\
&=z_\theta^{ij} -\sum_k \tilde{g}_\theta^{ik}  \sldin{L_{\theta,k}}{L_\theta^j}\\
&=z_\theta^{ij}-\tilde{g}_\theta^{ij}, 
\end{align*}
where second equality is due to Lemma \ref{lem3}. 
third equality follows from definition of the RLD dual operator. 
Fourth equality is due to Lemma \ref{lem1}. 
Therefore, we show the matrix inequality $\tilde{M}_\theta=Z_\theta- \tilde{G}_\theta^{-1}\ge0$. 
The equality is satisfied if and only if this matrix $\tilde{M}_\theta$ is zero. 
This is equivalent to $m_\theta^i=L_\theta^i-\tilde{L}_\theta^i=0$ for all $i=1,2,\dots,n$. 

The second inequality can be proven in the same way by starting with 
$M_\theta:= [\sldin{m_{\theta}^i}{m_{\theta}^j}]$. 
$\square$
\end{proof}

Next, define $m_{\theta,i}:=L_{\theta,i}-\tilde{L}_{\theta,i}$ and consider another 
hermite matrix $M_\theta:=[\sldin{m_{\theta,i}}{m_{\theta,j}}]$. 
Following exactly the same logic as in Lemma \ref{lem4},  we can prove the next lemma.  
\begin{lemma}\label{lem4-2} Two matrix inequalities 
\begin{align}\nonumber
G_\theta+\tilde{G}_\theta\tilde{Z}_\theta\tilde{G}_\theta&\ge 2\tilde{G}_\theta,\\
\tilde{G}_\theta +G_\theta Z_\theta G_\theta&\ge 2G_\theta, 
\end{align}
hold where the equality conditions are same and is given by $\forall i$, $L_{\theta,i}-\tilde{L}_{\theta,i}=0$. 
\end{lemma}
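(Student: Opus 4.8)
The plan is to mimic the construction in Lemma \ref{lem4} almost verbatim, but now built from the \emph{lower}-index difference operators $m_{\theta,i}=L_{\theta,i}-\tilde{L}_{\theta,i}$ (as the statement already suggests) rather than from their duals. First I would record that the hermite matrix $M_\theta:=[\sldin{m_{\theta,i}}{m_{\theta,j}}]$ is a Gram matrix for the SLD inner product and hence positive semi-definite, exactly as $\tilde{M}_\theta$ was in Lemma \ref{lem4}; likewise $\tilde{M}'_\theta:=[\rldin{m_{\theta,i}}{m_{\theta,j}}]\ge 0$ for the RLD inner product. The first claimed inequality will fall out of $M_\theta\ge 0$ and the second out of $\tilde{M}'_\theta\ge 0$, the two computations being mirror images under the exchange of the SLD and RLD structures.

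The heart of the argument is to expand $\sldin{m_{\theta,i}}{m_{\theta,j}}$ into its four terms and identify each one. Two are immediate: $\sldin{L_{\theta,i}}{L_{\theta,j}}=g_{\theta,ij}$, and by Lemma \ref{lem1} with $f(x)=x$ one has $\sldin{L_{\theta,i}}{\tilde{L}_{\theta,j}}=\rldin{\tilde{L}_{\theta,i}}{\tilde{L}_{\theta,j}}=\tilde{g}_{\theta,ij}$; conjugate symmetry of the inner product together with hermiticity of $\tilde{G}_\theta$ turns the third term $\sldin{\tilde{L}_{\theta,i}}{L_{\theta,j}}$ into $\tilde{g}_{\theta,ij}$ as well. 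The remaining term $\sldin{\tilde{L}_{\theta,i}}{\tilde{L}_{\theta,j}}$ is the only genuinely new object: substituting the dual expansion $\tilde{L}_\theta^i=\sum_k\tilde{g}_\theta^{ki}\tilde{L}_{\theta,k}$ into the definition of $\tilde{Z}_\theta$ and using hermiticity of $\tilde{G}_\theta^{-1}$ to absorb the conjugate coming from the antilinear first slot, I would show $\tilde{Z}_\theta=\tilde{G}_\theta^{-1}[\sldin{\tilde{L}_{\theta,i}}{\tilde{L}_{\theta,j}}]\tilde{G}_\theta^{-1}$, i.e. $[\sldin{\tilde{L}_{\theta,i}}{\tilde{L}_{\theta,j}}]=\tilde{G}_\theta\tilde{Z}_\theta\tilde{G}_\theta$. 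Collecting the four pieces gives $M_\theta=G_\theta-2\tilde{G}_\theta+\tilde{G}_\theta\tilde{Z}_\theta\tilde{G}_\theta\ge 0$, which is the first inequality. Running the identical expansion for $\tilde{M}'_\theta$ and identifying $[\rldin{L_{\theta,i}}{L_{\theta,j}}]=G_\theta Z_\theta G_\theta$ (now using that $G_\theta^{-1}$ is real symmetric) yields $\tilde{M}'_\theta=\tilde{G}_\theta-2G_\theta+G_\theta Z_\theta G_\theta\ge 0$, the second inequality.

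For the equality condition I would invoke positive definiteness of both inner products on $\lofh$ for a full-rank state: a positive semi-definite Gram matrix vanishes if and only if all of its generating vectors vanish, so $M_\theta=0$ (equivalently $\tilde{M}'_\theta=0$) holds if and only if $m_{\theta,i}=L_{\theta,i}-\tilde{L}_{\theta,i}=0$ for every $i$, giving the single common equality condition. The main obstacle I anticipate is bookkeeping rather than conceptual: unlike in Lemma \ref{lem4}, where orthogonality (Lemma \ref{lem3}) annihilated the cross terms, here the lower-index differences are \emph{not} orthogonal to the tangent spaces, so all four terms survive and must be handled. The one delicate point is correctly tracking the complex conjugation attached to the antilinear first slot when passing between $\tilde{L}_{\theta,i}$ and its dual $\tilde{L}_\theta^i$ — in particular getting the placement of $\tilde{G}_\theta$ versus $\tilde{G}_\theta^{-1}$ (and $G_\theta$ versus $G_\theta^{-1}$) right in that step, which is where the factors $\tilde{G}_\theta\tilde{Z}_\theta\tilde{G}_\theta$ and $G_\theta Z_\theta G_\theta$ are generated.
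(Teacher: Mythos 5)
Your proposal is correct and follows exactly the route the paper intends: it defines $M_\theta=[\sldin{m_{\theta,i}}{m_{\theta,j}}]$ (and its RLD counterpart) and invokes positivity of the Gram matrix, only sketching the rest as ``the same logic as Lemma \ref{lem4}.'' You rightly observe the one place where the analogy is not literal --- the cross terms no longer vanish by Lemma \ref{lem3} and must be computed via Lemma \ref{lem1} and conjugate symmetry --- and your identifications $[\sldin{\tilde{L}_{\theta,i}}{\tilde{L}_{\theta,j}}]=\tilde{G}_\theta\tilde{Z}_\theta\tilde{G}_\theta$ and $[\rldin{L_{\theta,i}}{L_{\theta,j}}]=G_\theta Z_\theta G_\theta$ are exactly what is needed.
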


Finally, the commutation operator and logarithmic operators satisfy the following relations \cite{comment}.  
Importantly, the right hand side of three equations are expressed as a difference 
between two hermite matrices defined in Eqs.~\eqref{4matrix}. 
\begin{lemma}\label{lem5} 
\begin{align} 
\sldin{L_{\theta}^i}{\I\Dop{L_{\theta}^j}}&=z_\theta^{ij}-g_\theta^{ij}=\I\,\Im z_\theta^{ij},\label{prop5}\\ 
\sldin{\tilde{L}_{\theta}^i}{\I\Dop{L_{\theta}^j}}&=\tilde{g}_\theta^{ij}-g_\theta^{ij},\label{prop6}\\
\sldin{\tilde{L}_{\theta}^i}{\I\Dop{\tilde{L}_{\theta}^j}}&=\tilde{g}_\theta^{ij}-\tilde{z}_\theta^{ij},\label{prop7}
\end{align}
hold for $\forall i,j$. 
\end{lemma}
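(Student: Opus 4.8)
The plan is to reduce all three relations to a single master identity valid for \emph{arbitrary} operators, and then to specialise the two arguments. First I would prove that, for every $X,Y\in\lofh$,
\be
\sldin{X}{\I\Dop{Y}}=\rldin{X}{Y}-\sldin{X}{Y}. \label{master}
\ee
This follows from a short trace computation: expanding the left-hand side with the definition of $\sldin{\cdot}{\cdot}$ and using cyclicity of the trace gives $\frac{\I}{2}\tr{(\rho_\theta\Dop{Y}+\Dop{Y}\rho_\theta)X^\dagger}$, and the defining relation \eqref{defDop} of the commutation operator turns the anticommutator $\rho_\theta\Dop{Y}+\Dop{Y}\rho_\theta$ into $-\I[\rho_\theta,Y]$; splitting the commutator and applying cyclicity once more produces exactly $\rldin{X}{Y}-\sldin{X}{Y}$. (The same identity can also be extracted from \eqref{prop1} together with Lemma \ref{lem2}, but the direct route is shorter.)

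Given \eqref{master}, Eqs.~\eqref{prop5} and \eqref{prop7} follow with no further work. Choosing $X=L_\theta^i$ and $Y=L_\theta^j$ makes the right-hand side $\rldin{L_\theta^i}{L_\theta^j}-\sldin{L_\theta^i}{L_\theta^j}$, which are precisely $z_\theta^{ij}$ and $g_\theta^{ij}$ from \eqref{4matrix}, giving \eqref{prop5}; its second equality is then immediate from $g_\theta^{ij}=\Re z_\theta^{ij}$ (the relation $\Re Z_\theta=G_\theta^{-1}$), so that $z_\theta^{ij}-g_\theta^{ij}=\I\,\Im z_\theta^{ij}$. Choosing $X=\tilde{L}_\theta^i$ and $Y=\tilde{L}_\theta^j$ gives $\rldin{\tilde{L}_\theta^i}{\tilde{L}_\theta^j}-\sldin{\tilde{L}_\theta^i}{\tilde{L}_\theta^j}=\tilde{g}_\theta^{ij}-\tilde{z}_\theta^{ij}$, again by reading off \eqref{4matrix}, which is \eqref{prop7}.

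The mixed identity \eqref{prop6} is the only one needing an extra input, because the two inner products that arise do not appear directly in \eqref{4matrix}. Applying \eqref{master} with $X=\tilde{L}_\theta^i$ and $Y=L_\theta^j$ leaves $\rldin{\tilde{L}_\theta^i}{L_\theta^j}-\sldin{\tilde{L}_\theta^i}{L_\theta^j}$. To evaluate the first I would write $L_\theta^j=\tilde{L}_\theta^j+(L_\theta^j-\tilde{L}_\theta^j)$ and invoke Lemma \ref{lem3}: the difference $L_\theta^j-\tilde{L}_\theta^j$ is RLD-orthogonal to $\tilde{T}_\theta(\cM)$, which contains $\tilde{L}_\theta^i$, so $\rldin{\tilde{L}_\theta^i}{L_\theta^j}=\tilde{g}_\theta^{ij}$. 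Symmetrically, writing $\tilde{L}_\theta^i=L_\theta^i-(L_\theta^i-\tilde{L}_\theta^i)$ and using the SLD-orthogonality of $L_\theta^i-\tilde{L}_\theta^i$ to $T_\theta(\cM)\ni L_\theta^j$ gives $\sldin{\tilde{L}_\theta^i}{L_\theta^j}=g_\theta^{ij}$. Subtraction yields $\tilde{g}_\theta^{ij}-g_\theta^{ij}$, establishing \eqref{prop6}.

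I expect the difficulty to be bookkeeping rather than anything conceptual. The two places to be careful are: tracking the factors of $\I$ and the single sign change in the derivation of \eqref{master} (when the anticommutator is rewritten as $-\I[\rho_\theta,Y]$), and making sure that in the mixed products Lemma \ref{lem3} is applied in the correct slot of each inner product, since both $\sldin{\cdot}{\cdot}$ and $\rldin{\cdot}{\cdot}$ are conjugate-linear in their first argument. Neither presents a genuine obstacle, so the whole lemma rests essentially on the one-line identity \eqref{master}.
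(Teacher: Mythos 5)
Your proof is correct and follows essentially the same route as the paper: the paper's own proof establishes exactly your master identity $\rldin{X}{Y}-\sldin{X}{Y}=\sldin{X}{\I\Dop{Y}}$ by the same trace computation with the defining relation of $\cD_{\rho_\theta}$, and then specializes the two arguments in the same three ways. The only (minor) difference is that you justify $\sldin{\tilde{L}_\theta^i}{L_\theta^j}=g_\theta^{ij}$ and $\rldin{\tilde{L}_\theta^i}{L_\theta^j}=\tilde{g}_\theta^{ij}$ explicitly via Lemma \ref{lem3}, where the paper merely states these ``can be directly checked.''
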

\begin{proof}
Using definitions of the SLD and RLD inner product, and the commutation operator, we have
\begin{align*}
\rldin{X}{Y}-\sldin{X}{Y}&=\frac12 \tr{\rho_\theta[Y\,,\,X^\dagger]}\\
&=\frac12 \tr{[\rho_\theta\,,\,Y]X^\dagger}\\
&=\frac\I2 \tr{(\rho_\theta\Dop{Y}+\Dop{Y}\rho_\theta)X^\dagger}\\
&=\frac\I2 \tr{\rho_\theta(\Dop{Y}X^\dagger+X^\dagger\Dop{Y}}\\
&=\frac12 \sldin{X}{\I\Dop{Y}}, 
\end{align*}
for all $X,Y\in\lofh$. 
Setting $X=L_\theta^i, Y=L_\theta^j$, we prove Eq.~\eqref{prop5}. 
Similarly, $X=\tilde{L}_\theta^i, Y=\tilde{L}_\theta^j$ gives Eq.~\eqref{prop7}. 

Next, we observe 
\be
g^{ij}_\theta=\sldin{\tilde{L}_\theta^i}{L_\theta^j},\ \tilde{g}_\theta^{ij}=\rldin{\tilde{L}_\theta^i}{L_\theta^j}, 
\ee
which can be directly checked. These relations then prove Eq.~\eqref{prop6}. 
$\square$
\end{proof}

\section{Model class in quantum parametric models}\label{sec3}
In this section, we consider four different classes for quantum parametric models. 
The first class is a purely classical. The second class is so called a commutative model. 
The third and fourth ones are nontrivial, the D-invariant and asymptotically classical models. 
\subsection{Classical model}
At each point $\theta\in\Theta$, the quantum state $\rho_\theta$ can be diagonalized with 
a unitary $U_\theta$ as $\rho_\theta=U_\theta \Lambda_\theta U_\theta^{-1}$, 
where a diagonal matrix, 
\be
\Lambda_\theta=\left(\begin{array}{cccc}p_\theta(1) & 0 & \cdots & 0 \\0 & p_\theta(2) & \cdots & 0 \\
\vdots & \vdots & \ddots & \vdots \\0 & 0 & \cdots & p_\theta(d)\end{array}\right)
\ee 
lists the eigenvalues of the state $\rho_\theta$. 
By definition, $\forall i,\,p_\theta(i)>0$ and $\sum_{i=1}^dp_\theta(i)=1$. 
In other words, $\Lambda_\theta$ can be regarded as an element of 
$\cP(d):=$ the set of all (positive) probability distributions on the set $\{1,2,\dots,d\}$. 
When the unitary $U_\theta$ is independent of $\theta$ for all point in $\Theta$, 
it is clear that any statistical problem is reduced to the classical one. With this identification, 
we have the following definition. 
\begin{definition}[Classical statistical model]\label{def_cmodel}
For a given parametric quantum statistical model \eqref{qmodel}, 
the model is said {\it classical} if the family of quantum states $\rho_\theta$ 
can be diagonalized with a $\theta$-independent unitary $U$ as
\be\label{cmodel_decomp}
\rho_\theta=U \Lambda_\theta U^{-1},
\ee
for all parameter values $\theta\in\Theta$. 
\end{definition}
%
In the following, we denote the set of all classical models on $\cH$ by $\cM_{C}$. 

\subsection{Quasi classical model}
The second class of quantum statistical models has been known in the literature. 
It is called a {\it quasi classical} or {\it commutative} model. 
\begin{definition}[Quasi classical model]\label{def_cmodel}
A parametric quantum statistical model \eqref{qmodel} is said {\it quasi classical}, 
if all SLD operators commute with each other at all point $\theta$. That is, 
\be
[L_{\theta,i}\,,\,L_{\theta,j}]=0,\ \forall i, j, 
\ee
hold for all parameter values $\theta\in\Theta$. 
\end{definition}

Clearly, if the model is classical, then it is also quasi classical. 
However, the converse statement does not hold in general. 
A simple counter example is discussed in Sec.~\ref{sec-non_cl}. 
It is also easy to see that any one-parameter model is automatically quasi-classical. 

An important property of quasi classical models is that 
we can diagonalize all SLD operators simultaneously. 
It is then possible to perform a measurement that saturates the SLD CR bound 
defined in Eq.~\eqref{sldbound} explicitly. 
Let us denote the set of all quasi classical models on $\cH$ by $\cM_{QC}$. 

\subsection{Asymptotic bound: Holevo bound}
In this subsection, we give a brief summary of the asymptotic theory on quantum state estimation \cite{hayashiWS}. 
As in classical statistics, we are given $N$-tensor product of identically and independently distributed (i.i.d.) 
quantum states $\rho_\theta^{\otimes N}:=\rho_\theta\otimes\rho_\theta\otimes\cdots\otimes\rho_\theta$ on $\cH$. 
We perform a measurement $\hat{\Pi}^{(N)}$ on $\rho_\theta^{\otimes N}$, 
which is described by a set of matrices under certain conditions, to infer an unknown parameter value $\theta$. 
The estimation error of the measurement $\hat{\Pi}^{(N)}$ is evaluated by the standard mean-square error (MSE) matrix 
$V^{(N)}_\theta[\hat\Pi^{(N)}]$. In the asymptotic theory of quantum state estimation, one minimizes 
the weighted trace of the MSE matrix under an additional condition as follows.  
\be \label{CRbound}
C_\theta[W]:=\inf_{ \{\hat{\Pi}^{(N)}\}\mbox{ is a.u.}}
\big\{ \limsup_{N\to\infty}\,N\,\Tr{W V^{(N)}_\theta[\hat{\Pi}^{(N)}]}\big\}, 
\ee
where $W>0$ is an arbitrary positive-definite weight matrix and a.u. stands for asymptotically unbiased.   
The first order estimation error bound \eqref{CRbound} is usually referred to as the Cram\'er-Rao (CR) type bound in the literature. 
There have been many mathematical works to obtained an alternative expression for the CR bound 
in terns of information quantities, such as the quantum Fisher information matrix. 
Unlike classical statistics, where the bound is given by the Fisher information matrix, 
the above bound cannot be written as a simple closed formula in general. 
However, it takes the following optimization form known as the Holevo bound: 
\be \label{hbound}
C_\theta^H[W]:=\inf_{\vec{X}\in\cX_\theta}h_\theta[\vec{X}|W]. 
\ee
In this definition, the set $\cX_\theta$ is defined by 
\begin{multline}\nonumber
\cX_\theta:=\{\vec{X}=(X^1,X^2,\dots, X^n)\,|\,\forall i\,X^i\in\lofhh,\\
 \forall i\,\tr{\rho_\theta X^i}=0, \forall i,j\,\tr{\frac{\del \rho_\theta}{\del\theta^i} X^j}=\delta^j_{\,i} \}.
\end{multline}
Introduce an $n\times n$ hermite matrix $H_\theta[\vec{X}]:= \big[ \rldin{X^i}{X^j} \big]$,  
and we define the function $h_\theta[\vec{X}|W]$ by
\be \nonumber
h_\theta[\vec{X}|W]:=\Tr{W \Re H_\theta[\vec{X}]}+\Tr{|W^{\frac12} \Im H_\theta[\vec{X}] W^{\frac12}|}, 
\ee
where $|X|=\sqrt{X^\dagger X}$ denotes the absolute value of a linear operator $X$, 
and $\Im X:=(X-X^*)/2\I$ denotes the imaginary part of $X\in\lofh$.   
The following theorem establishes that the Holevo bound is equal to the CR type bound. 
\begin{theorem}\label{thm1}
For a quantum statistical model satisfying the regularity conditions, 
$C_\theta[W]=C_\theta^H[W]$ holds for all weight matrices. 
\end{theorem}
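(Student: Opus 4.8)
The plan is to prove the two inequalities $C_\theta[W]\ge C_\theta^H[W]$ (the converse) and $C_\theta[W]\le C_\theta^H[W]$ (achievability) separately. This theorem is really a synthesis of the single-shot Holevo inequality with the asymptotic theory built on quantum local asymptotic normality, so I would not attempt a single self-contained derivation but organize the argument around these two halves and lean on the convergence results of Refs.~\cite{\QlanRef} for the heavy lifting.

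For the converse I would first record the additivity property that the Holevo bound evaluated for the $N$-fold i.i.d. model $\rho_\theta^{\otimes N}$ equals $C_\theta^H[W]/N$: the logarithmic derivatives of a product state are sums of single-copy operators supported on distinct tensor factors, so the matrix $H_\theta[\vec{X}]$ scales by $1/N$ under the natural collective choice of $\vec{X}$, and $h_\theta[\vec{X}|W]$ inherits this scaling. The second ingredient is Holevo's single-shot lower bound: for any locally unbiased measurement on $\rho_\theta^{\otimes N}$, described by a POVM $\{\Pi_a\}$ with estimator $\hat\theta$, one sets $X^i:=\sum_a(\hat\theta^i(a)-\theta^i)\Pi_a$, verifies that the local unbiasedness constraints are exactly $\tr{\rho_\theta^{\otimes N} X^i}=0$ and $\tr{\partial_i\rho_\theta^{\otimes N} X^j}=\delta^j_i$, i.e. $\vec{X}\in\cX$, and then shows $\Tr{W V^{(N)}}\ge h_\theta[\vec X|W]\ge C_\theta^H[W]/N$ by comparing the classical covariance with $\Re H_\theta[\vec X]$ and its imaginary-part correction. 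Combining these and taking $\limsup_N$ of $N\Tr{WV^{(N)}}$ gives the converse. The delicate point is that $C_\theta[W]$ is defined over merely asymptotically unbiased schemes rather than exactly locally unbiased ones; I would close this gap by the standard localization/truncation argument showing that the relaxation to asymptotic unbiasedness cannot lower the leading-order cost.

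For achievability the strategy is quantum local asymptotic normality. In the shrinking neighborhood $\theta+h/\sqrt{N}$ the model $\rho_\theta^{\otimes N}$ converges, with asymptotically vanishing trace-norm error, to a quantum Gaussian shift model, namely a product of a classical Gaussian shift and a quantum Gaussian (displaced thermal) state whose covariance data are fixed by $G_\theta$, $\tilde G_\theta$, and the commutation operator $\cD_{\rho_\theta}$ of Sec.~\ref{sec2-2}. For this Gaussian limit the Holevo bound is attained exactly by an explicit Gaussian heterodyne-type measurement, and its value coincides with $C_\theta^H[W]$ via the additivity computation above. Transporting the optimal Gaussian measurement back through the Q-LAN channels yields a sequence $\{\hat\Pi^{(N)}\}$ that is asymptotically unbiased and whose rescaled weighted MSE converges to $C_\theta^H[W]$, establishing $C_\theta[W]\le C_\theta^H[W]$.

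I expect the achievability half to be the main obstacle. Establishing Q-LAN for full-rank qudit models in a strong enough mode of convergence, and verifying that the pulled-back optimal Gaussian measurement is genuinely asymptotically unbiased with the claimed error, are both technically heavy; this is precisely the content of Refs.~\cite{\QlanRef}. Accordingly, in the paper I would present the theorem with a proof sketch along these lines and cite those detailed constructions rather than reproduce them.
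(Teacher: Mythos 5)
Your outline matches how this theorem is actually established: the paper itself gives no proof of Theorem~\ref{thm1} and simply defers to Refs.~\cite{\QlanRef}, which prove it precisely by the converse-plus-Q-LAN-achievability route you describe, so your proposal is essentially the same approach. One small caution: the scaling computation you sketch for additivity only shows that the $N$-copy Holevo bound is \emph{at most} $C_\theta^H[W]/N$, whereas the converse half of the argument needs the reverse inequality, which is the nontrivial content of the additivity theorem and must likewise be taken from the cited literature.
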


Proofs based on different assumptions can be found in Refs.~\cite{\QlanRef}. 
The Holevo bound is regarded as unification of previously known bounds \cite{nagaoka89}, 
such as the SLD and RLD CR bounds: 
\begin{align}\label{sldbound}
C_\theta^S[W]&:=\Tr{WG_\theta^{-1}},\\  \label{rldbound}
C_\theta^R[W]&:=\Tr{W\Re\tilde{G}_\theta^{-1}}+\Tr{|W^{\frac12} {\Im}\tilde{G}_\theta^{-1} W^{\frac12} |}. 
\end{align} 
The relation ship $C_\theta^H[W]\ge\max\{ C_\theta^S[W],C_\theta^R[W]\}$ holds for all $W>0$ \cite{holevo}.  

\subsection{D-invariant model}
Holevo introduced an important class of quantum statistical models based on 
the commutation operator $\cD_{\rho_\theta}$ \cite{holevo}.  
\begin{definition}[D-invariant model (Holevo)]\label{def_Dinv1}
A quantum statistical model \eqref{qmodel} is called {\it D-invariant} at $\theta$, 
if the SLD tangent space at $\theta$ is an invariant subspace of the commutation operator. 
\end{definition}

Mathematically, this condition is expressed as $\forall X\in T_\theta(\cM)$, $\Dop{X}\in T_\theta(\cM)$ at $\theta$. 
For our discussion, we will focus on the D-invariant model at all $\theta$ (global D-invariance). 
For (globally) D-invariant models, the Holevo bound can be expressed analytically and coincides with 
the RLD CR bound \cite{holevo}, i.e., $\forall W>0$, $C^H_\theta[W]=C^R_\theta[W]$, 
and its achievability was discussed in the literature. 

Based on the result of Ref.~\cite{jsJMP}, we have another definition 
for the D-invariant model. 
\begin{definition}[D-invariant model 2]
A quantum statistical model \eqref{qmodel} is called {\it D-invariant} at $\theta$, 
if the Holevo bound is identical to the RLD CR bound for all positive weight matrices.
\end{definition}
The equivalence between two definitions was proven \cite{jsJMP}. 
\begin{theorem}\label{thm2}
The Holevo bound is identical to the RLD CR bound for all weight matrices, 
if and only if the quantum statistical model is D-invariant in the sense of Definition \ref{def_Dinv1}. 
\end{theorem}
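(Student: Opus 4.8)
The plan is to funnel both implications through a single algebraic reformulation of $D$-invariance: the model is $D$-invariant at $\theta$ if and only if $Z_\theta=\tilde{G}_\theta^{-1}$, equivalently $L_\theta^i=\tilde{L}_\theta^i$ for every $i$ (the defect operators $m_\theta^i=L_\theta^i-\tilde{L}_\theta^i$ of Lemma~\ref{lem4} all vanish). I would prove this reformulation first. For the forward part, $D$-invariance gives $\Dop{L_\theta^j}\in T_\theta(\cM)$, so $\Dop{L_\theta^j}$ equals its own SLD-orthogonal projection onto $T_\theta(\cM)$; reading the projection coefficients off Eq.~\eqref{prop5} yields $\Dop{L_\theta^j}=\sum_k(\Im z_\theta^{kj})\,L_{\theta,k}$. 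Pairing this with $\tilde{L}_\theta^i$, using $\sldin{\tilde{L}_\theta^i}{L_{\theta,k}}=\delta^i_k$ (a direct consequence of $g_\theta^{ij}=\sldin{\tilde{L}_\theta^i}{L_\theta^j}$ recorded in the proof of Lemma~\ref{lem5}) and then Eq.~\eqref{prop6}, gives $\tilde{g}_\theta^{ij}-g_\theta^{ij}=\I\,\Im z_\theta^{ij}$, i.e. $\tilde{G}_\theta^{-1}=\Re Z_\theta+\I\,\Im Z_\theta=Z_\theta$. Conversely, $Z_\theta=\tilde{G}_\theta^{-1}$ is exactly the equality case of Lemma~\ref{lem4}, so $m_\theta^i=0$ and $\tilde{L}_\theta^i=L_\theta^i\in T_\theta(\cM)$; since the $\tilde{L}_\theta^i$ span $\tilde{T}_\theta(\cM)$ over $\bbc$, this forces $\tilde{T}_\theta(\cM)=\cspan{L_{\theta,i}}$, and Eq.~\eqref{prop1} then returns $\Dop{T_\theta(\cM)}\subseteq T_\theta(\cM)$.

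Granting the reformulation, the implication ``$D$-invariant $\Rightarrow C_\theta^H[W]=C_\theta^R[W]$'' is short. The SLD dual operators are hermite and satisfy $\tr{\rho_\theta L_\theta^i}=0$ and $\tr{\del_i\rho_\theta L_\theta^j}=\delta^j_i$, so $\vec{L}=(L_\theta^1,\dots,L_\theta^n)$ is feasible in $\cX_\theta$, with $H_\theta[\vec{L}]=[\rldin{L_\theta^i}{L_\theta^j}]=Z_\theta$. Under $D$-invariance $Z_\theta=\tilde{G}_\theta^{-1}$, so $h_\theta[\vec{L}|W]=\Tr{W\Re Z_\theta}+\Tr{|W^{\frac12}\Im Z_\theta W^{\frac12}|}=C_\theta^R[W]$ for every $W$. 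This gives $C_\theta^H[W]\le C_\theta^R[W]$, which together with the general bound $C_\theta^H[W]\ge C_\theta^R[W]$ yields equality.

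The reverse implication is the hard part, and I would argue it by contraposition: if the model is not $D$-invariant then $\tilde{M}_\theta=Z_\theta-\tilde{G}_\theta^{-1}\ge0$ is nonzero, and I must produce a weight with $C_\theta^H[W]>C_\theta^R[W]$. The main tool is a monotonicity lemma for the Holevo functional $\phi_W(A):=\Tr{W\Re A}+\Tr{|W^{\frac12}\Im A W^{\frac12}|}$: for complex hermite $A\ge B\ge0$ one has $\phi_W(A)\ge\phi_W(B)$, proved through the semidefinite-dual (variational) representation of the trace norm. Writing any $\vec{X}\in\cX_\theta$ as $X^i=\tilde{L}_\theta^i+(X^i-\tilde{L}_\theta^i)$ and running the RLD Gram computation of Lemma~\ref{lem4} gives $H_\theta[\vec{X}]=\tilde{G}_\theta^{-1}+\Delta_\theta[\vec{X}]$ with $\Delta_\theta[\vec{X}]=[\rldin{X^i-\tilde{L}_\theta^i}{X^j-\tilde{L}_\theta^j}]\ge0$, which re-proves $C_\theta^H\ge C_\theta^R$ and isolates the equality case. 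The heart of the matter is then to show that $\phi_W(H_\theta[\vec{X}])=\phi_W(\tilde{G}_\theta^{-1})$ cannot hold for all $W$ unless $\Delta_\theta[\vec{X}]$ can be driven to $0$, i.e. unless $\tilde{M}_\theta=0$. A clean necessary condition is already visible: $C_\theta^H[W]\ge C_\theta^S[W]=\Tr{WG_\theta^{-1}}$ forces $\Tr{|W^{\frac12}\Im\tilde{G}_\theta^{-1}W^{\frac12}|}\ge\Tr{W\,\Re\tilde{M}_\theta}$ for all $W>0$ (using $G_\theta^{-1}-\Re\tilde{G}_\theta^{-1}=\Re\tilde{M}_\theta$). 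This is not yet enough, and the real obstacle is to upgrade it to $\Re\tilde{M}_\theta=0$: I would do so by invoking the equality (complementary-slackness) condition of the monotonicity lemma at the Holevo optimizer and probing it with rank-one weights aligned to the eigenvectors of $\Re\tilde{M}_\theta$. The subtlety, and the step I expect to be the genuine difficulty, is that $\phi_W$ is not separately monotone in the real and imaginary parts — a real increase in $\Delta_\theta[\vec{X}]$ can be partly offset by a reduction of the trace-norm term — so certifying $\Re\tilde{M}_\theta\ne0$ requires the precise optimizer/duality analysis rather than the crude bound above.
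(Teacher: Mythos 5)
First, a point of reference: the paper does not actually prove Theorem~\ref{thm2} in this document --- it states the theorem and defers the proof to Ref.~\cite{jsJMP}. So there is no in-paper argument to compare against line by line; what the paper does supply are the ingredients you also use, namely the equality case of Lemma~\ref{lem4} and the characterizations in Theorem~\ref{lem6}. Your algebraic reformulation (D-invariance $\lra Z_\theta=\tilde{G}_\theta^{-1}\lra \forall i,\ L_\theta^i=\tilde{L}_\theta^i$) is correct: the forward projection argument via Eqs.~\eqref{prop5} and \eqref{prop6} reproduces conditions 2--4 of Theorem~\ref{lem6}, and the converse via Lemma~\ref{lem4} and Eq.~\eqref{prop1} works (you should add one line noting that $\Dop{X}$ is hermite whenever $X$ is, so that landing in $\cspan{L_{\theta,i}}$ plus hermiticity puts $\Dop{X}$ back in the \emph{real} span $T_\theta(\cM)$). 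The implication ``D-invariant $\Rightarrow C_\theta^H[W]=C_\theta^R[W]$'' via the feasible point $\vec{L}=(L_\theta^1,\dots,L_\theta^n)\in\cX_\theta$ with $H_\theta[\vec{L}]=Z_\theta=\tilde{G}_\theta^{-1}$, combined with $C_\theta^H\ge C_\theta^R$, is complete and correct.

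The genuine gap is the converse, and you have flagged it yourself. The decomposition $H_\theta[\vec{X}]=\tilde{G}_\theta^{-1}+\Delta_\theta[\vec{X}]$ with $\Delta_\theta[\vec{X}]\ge0$, together with the monotonicity of $\phi_W$, only re-derives $C_\theta^H[W]\ge C_\theta^R[W]$; it does not show that equality for all $W$ forces $\tilde{M}_\theta=0$. To close the argument you would need two things you do not establish: (i) a uniform positive lower bound on $\Delta_\theta[\vec{X}]$ over the feasible set --- this is in fact available, because hermiticity of $X^i$ pins the anti-hermite part of $X^i-\tilde{L}_\theta^i$ to the fixed operator $\bigl((\tilde{L}_\theta^i)^\dagger-\tilde{L}_\theta^i\bigr)/2$, so each diagonal entry $\Delta_\theta^{ii}[\vec{X}]$ is bounded below by the squared RLD-distance from that operator to the hermite subspace, which is strictly positive whenever $\tilde{L}_\theta^i$ is not hermite --- and (ii) the strict-inequality step for $\phi_W$, which, as you correctly observe, cannot be extracted from monotonicity alone since an increase in $\Re\Delta$ can be offset by a decrease of the trace-norm term. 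Your ``crude'' consequence $\Tr{|W^{\frac12}\Im\tilde{G}_\theta^{-1}W^{\frac12}|}\ge\Tr{W\,\Re\tilde{M}_\theta}$ is indeed insufficient to conclude $\Re\tilde{M}_\theta=0$. As written, the reverse implication is a plan with the decisive step (the equality/complementary-slackness analysis of the optimizer) left open, so the proof is incomplete precisely where the theorem is nontrivial.
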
 
The set of all D-invariant models is denoted by $\cM_D$. 

\subsection{Asymptotically classical model}
The last class of quantum statistical models is when the Holevo bound coincides with the SLD CR bound. 
\begin{definition}
A quantum statistical model \eqref{qmodel} is called {\it asymptotically classical}, 
if the Holevo bound is identical to the SLD CR bound for all positive weight matrices. 
\end{definition}
Mathematically, this definition is expressed by the condition: $\forall W>0$, $C_\theta^H[W]=C_\theta^S[W]$. 
We shall denote the set of all asymptotically classical models by $\cM_{AC}$.

\section{Model classification and characterization}\label{sec4}
In this section, we give classification of quantum statistical models 
based on the notations and concept introduced in Sec.~\ref{sec3}. 
We first list the results on several equivalent characterization of each model class. 
Discussions on the results are presented followed by the proofs in Sec.~\ref{sec4-3}. 

\subsection{Results}\label{sec4-1}
\subsubsection{Classical model}
The following theorem characterizes the classical model.
\begin{theorem}\label{thm4}
For a given (regular) quantum statistical model \eqref{qmodel}, the following conditions are all equivalent. 
\begin{enumerate}
\item The model is classical (Def.~\ref{def_cmodel}). \label{eq1}
\item $\forall X\in T_\theta(\cM)$, $[X\,,\,\rho_\theta]=0$. \label{eq2}
\item $\forall X\in \tilde{T}_\theta(\cM)$, $[X\,,\,\rho_\theta]=0$. \label{eq3}
\item $G_\theta=\tilde{G}_\theta$. \label{eq4}
\item $\forall i$, ${L}_{\theta,i}=\tilde{L}_{\theta,i}$. \label{eq5}
\item $\Dop{T_\theta(\cM)}=0$. \label{eq6}
\item $\Dop{\tilde{T}_\theta(\cM)}=0$. \label{eq7}
\item The model is D-invariant and asymptotically classical. \label{eq8}
\end{enumerate}\label{thm5}
Here we remind that all statements are made for global aspect of the model, 
that is for all point $\theta\in\Theta$. 
\end{theorem}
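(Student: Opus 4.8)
The plan is to prove the eight conditions equivalent by taking condition \ref{eq5} ($\forall i,\ L_{\theta,i}=\tilde L_{\theta,i}$) as the hub and closing a cycle that visits every statement. Concretely I would establish
\[ \text{(\ref{eq1})}\Rightarrow\text{(\ref{eq5})}\Rightarrow\text{(\ref{eq6})}\Leftrightarrow\text{(\ref{eq2})}\Rightarrow\text{(\ref{eq1})},\quad \text{(\ref{eq5})}\Leftrightarrow\text{(\ref{eq7})}\Leftrightarrow\text{(\ref{eq3})},\quad \text{(\ref{eq4})}\Leftrightarrow\text{(\ref{eq5})}, \]
and finally \ref{eq8}$\Leftrightarrow$\ref{eq4}, so that everything collapses to the single identity $m_{\theta,i}:=L_{\theta,i}-\tilde L_{\theta,i}=0$. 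Most links are purely algebraic and fall straight out of Sec.~\ref{sec2-3}; the two substantive points are the integration step \ref{eq2}$\Rightarrow$\ref{eq1} and the extraction of \ref{eq4} from the bound equality in \ref{eq8}.

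For the algebraic core I would argue as follows. Relation \eqref{prop1} gives $m_{\theta,i}=\I\Dop{\tilde L_{\theta,i}}$, so by linearity of $\cD_{\rho_\theta}$ the statement $m_{\theta,i}=0\ (\forall i)$ is identical to $\Dop{\tilde{T}_\theta(\cM)}=0$; this is \ref{eq5}$\Leftrightarrow$\ref{eq7}. Since $\rho_\theta$ is full rank, the anticommutator equation $\rho_\theta\Dop{X}+\Dop{X}\rho_\theta=0$ forces $\Dop{X}=0$ (in an eigenbasis the $(k,l)$ entry carries the factor $p_\theta(k)+p_\theta(l)>0$), so $[\rho_\theta,X]=0\Leftrightarrow\Dop{X}=0$; applied to the two tangent spaces this yields \ref{eq7}$\Leftrightarrow$\ref{eq3} and \ref{eq6}$\Leftrightarrow$\ref{eq2}. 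The step \ref{eq5}$\Rightarrow$\ref{eq6} is immediate, as $m_{\theta,i}=0$ gives $\Dop{\tilde L_{\theta,i}}=0$ and hence $\Dop{L_{\theta,i}}=\Dop{\tilde L_{\theta,i}}=0$. For \ref{eq4}$\Leftrightarrow$\ref{eq5} I would use the clean identity, obtained by reducing the cross terms with Lemma~\ref{lem1} at $f(x)=x$ (and noting $\rldin{L_{\theta,i}}{L_{\theta,i}}=\tr{\rho_\theta L_{\theta,i}^2}=g_{\theta,ii}$),
\[ \rldin{m_{\theta,i}}{m_{\theta,i}}=\tilde g_{\theta,ii}-g_{\theta,ii}\ \ge 0, \]
whose right-hand side vanishes for every $i$ exactly when $G_\theta=\tilde G_\theta$; positive definiteness of $\rldin{\cdot}{\cdot}$ then forces $m_{\theta,i}=0$, and the converse is the trivial computation $\tilde g_{\theta,ij}=g_{\theta,ij}$. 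Finally \ref{eq1}$\Rightarrow$\ref{eq5} is direct: a $\theta$-independent diagonalizing unitary makes $\rho_\theta$ and $\del_i\rho_\theta$ simultaneously diagonal, so both logarithmic derivatives reduce to $\rho_\theta^{-1}\del_i\rho_\theta$ and coincide.

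The principal obstacle is the converse \ref{eq2}$\Rightarrow$\ref{eq1}, where a pointwise commutation condition must be integrated into a \emph{global} common eigenbasis. From \ref{eq2} and the SLD equation I would first get $\del_i\rho_\theta=\rho_\theta L_{\theta,i}$ and hence $[\del_i\rho_\theta,\rho_\theta]=0$ for all $i$ and all $\theta$. Writing $\rho_\theta=\sum_k p_\theta(k)\ket{e_k(\theta)}\bra{e_k(\theta)}$ and using the first-order relation $\bra{e_m}\del_i\rho_\theta\ket{e_n}=(p_\theta(n)-p_\theta(m))\vecin{e_m}{\del_i e_n}$, the vanishing of the off-diagonal part of $[\del_i\rho_\theta,\rho_\theta]$ together with the non-degeneracy hypothesis forces $\vecin{e_m}{\del_i e_n}=0$ for $m\neq n$. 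This makes every eigenprojector $\theta$-independent, so the diagonalizing unitary may be chosen constant and the model is classical. This is the one place where the regularity/non-degeneracy assumptions are genuinely used.

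For \ref{eq8} the easy direction rides on the core: \ref{eq5} gives \ref{eq6}, so $\Dop{T_\theta(\cM)}=0\subset T_\theta(\cM)$ and the model is $D$-invariant (Def.~\ref{def_Dinv1}); moreover \ref{eq5}$\Rightarrow$\ref{eq4} makes $\tilde G_\theta^{-1}=G_\theta^{-1}$ real, whence $C_\theta^R[W]=C_\theta^S[W]$, and with $C_\theta^H[W]=C_\theta^R[W]$ (Theorem~\ref{thm2}) this gives $C_\theta^H[W]=C_\theta^S[W]$, i.e. asymptotic classicality. For the converse, $D$-invariance and asymptotic classicality give $C_\theta^R[W]=C_\theta^H[W]=C_\theta^S[W]$ for all $W>0$. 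Setting $A:=G_\theta^{-1}-\Re\tilde G_\theta^{-1}$ and $B:=\Im\tilde G_\theta^{-1}$, Lemma~\ref{lem4} with $\Re Z_\theta=G_\theta^{-1}$ gives $A\ge0$, and the equality of the two bounds reads $\Tr{|W^{\frac12}BW^{\frac12}|}=\Tr{WA}$ for all $W>0$. Evaluating along a ray $W=\diag(t,1/t,0^+,\dots)$ adapted to a nonzero $2\times2$ block of $B$, the left side stays constant (equal to $2|\beta|$ for that block) while $\Tr{WA}$ is affine-unbounded in $t$; this is impossible unless that block of $B$ vanishes, so $B=0$, and then $\Tr{WA}=0\ \forall W>0$ with $A\ge0$ forces $A=0$, i.e. $G_\theta=\tilde G_\theta$ (\ref{eq4}). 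I expect this bound-extraction, alongside the integration step, to be the delicate part, since it is the only place where the non-linear structure of the Holevo/RLD bound is exploited.
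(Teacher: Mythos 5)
Your proof is correct, and while it rests on the same underlying mechanism as the paper's --- reducing everything to the single identity $L_{\theta,i}=\tilde L_{\theta,i}$ together with the kernel characterization $\Dop{X}=0\Leftrightarrow[X,\rho_\theta]=0$ for full-rank states --- several links are routed differently. For \ref{eq4}$\Leftrightarrow$\ref{eq5} the paper detours through the dual operators: it deduces $\Im Z_\theta=0$ and $L_\theta^i=\tilde L_\theta^i$ from the equality condition of Lemma~\ref{lem4} and then converts back to the undualized operators, whereas you read the statement off the diagonal of the RLD Gram matrix of $m_{\theta,i}=L_{\theta,i}-\tilde L_{\theta,i}$ (in effect the content of Lemma~\ref{lem4-2}), which is shorter and avoids the inversion step. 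The larger divergence is in \ref{eq8}: the paper's converse is a one-liner invoking the matrix characterizations $\tilde G_\theta^{-1}=Z_\theta$ (Theorem~\ref{lem6}) and $Z_\theta=G_\theta^{-1}$ (Theorem~\ref{thm3}) to conclude $G_\theta=\tilde G_\theta$; you instead work only with the scalar equalities $C_\theta^R[W]=C_\theta^H[W]=C_\theta^S[W]$ and extract $\Im\tilde G_\theta^{-1}=0$ and $G_\theta^{-1}=\Re\tilde G_\theta^{-1}$ by a weight-ray argument ($W=\diag(t,1/t,\dots)$ leaves the trace-norm term invariant while the linear term is unbounded in $t$ unless the relevant entries vanish). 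This is more self-contained --- it needs only the definitions of the two CR bounds plus Lemma~\ref{lem4} --- but correspondingly more delicate, and it requires a continuity remark to pass to singular weights; the paper's route buys brevity at the cost of leaning on the $Z_\theta$-characterizations. Finally, you make explicit the integration step \ref{eq2}$\Rightarrow$\ref{eq1}, obtaining constancy of the eigenprojectors from $\vecin{e_m}{\del_i e_n}=0$ under the non-degeneracy hypothesis, which the paper dispatches as ``standard matrix analysis''; spelling this out is precisely where the regularity assumptions enter, and it is a worthwhile addition.
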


\subsubsection{D-invariant model}
In Ref.~\cite{jsJMP}, we derived several equivalent characterizations of the D-invariant model, 
which are summarized in the following theorem. 
\begin{theorem}\label{lem6}
Given a quantum statistical model, the following conditions are equivalent. 
\begin{enumerate}
\item $\cM$ is D-invariant at $\theta$.
\item $\forall i$, $\cD_{\rho_\theta}(L_\theta^i)=\sum_j (\Im Z_\theta)^{ji}L_{\theta,j}$. 
\item $Z_\theta=\tilde{G}_\theta^{-1}$ 
\item $\forall i$, $L_\theta^i=\tilde{L}_\theta^i$. 
\item $\forall X^i\in\lofhh$, $X^i-L_\theta^i\bot T_\theta(\cM)$ with respect to $\sldin{\cdot}{\cdot}$ 
$\Rightarrow$ $X^i-L_\theta^i\bot T_\theta(\cM)$ with respect to $\rldin{\cdot}{\cdot}$. 
\end{enumerate}
\end{theorem}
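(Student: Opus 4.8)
The plan is to prove the five conditions equivalent by a short set of implications routed through the explicit action of the commutation operator on the tangent space, dispatching the two cheap equivalences first and isolating the genuinely two-sided (SLD versus RLD) content in a single implication. I would first note that (3) and (4) are literally the equality case of Lemma \ref{lem4}: that lemma gives $Z_\theta\ge\tilde{G}_\theta^{-1}$ with equality if and only if $m_\theta^i:=L_\theta^i-\tilde{L}_\theta^i=0$ for all $i$, so $Z_\theta=\tilde{G}_\theta^{-1}$ is the same statement as $L_\theta^i=\tilde{L}_\theta^i$. For (1)$\Leftrightarrow$(2), I would observe that (2) displays $\Dop{L_\theta^i}$ as a (real) linear combination of the $L_{\theta,j}$, hence as an element of $T_\theta(\cM)$; since $\{L_\theta^i\}$ is a basis of $T_\theta(\cM)$ (the change of basis $[g_\theta^{ij}]$ is invertible) and $\cD_{\rho_\theta}$ is linear, this is exactly D-invariance. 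Conversely, under D-invariance $\Dop{L_\theta^i}\in T_\theta(\cM)$ may be expanded in $\{L_{\theta,j}\}$, and its coefficients are recovered by pairing with the dual basis $\{L_\theta^k\}$ in the SLD inner product; Lemma \ref{lem5}, Eq.~\eqref{prop5}, identifies these coefficients as $(\Im Z_\theta)^{ki}$.

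Next I would handle (1)$\Leftrightarrow$(5). Writing $Y:=X^i-L_\theta^i$, which ranges over all of $\lofhh$, the SLD-orthogonality hypothesis $Y\bot T_\theta(\cM)$ fixes $\Re\rldin{L_{\theta,j}}{Y}=0$ for all $j$, while the RLD-orthogonality conclusion in addition kills the imaginary parts. Using the identity $\rldin{X}{Y}-\sldin{X}{Y}=\tfrac12\sldin{X}{\I\Dop{Y}}$ established in the proof of Lemma \ref{lem5}, condition (5) reduces to the implication: $Y\bot_S T_\theta(\cM)\Rightarrow\Dop{Y}\bot_S T_\theta(\cM)$. Applying the skew-adjointness relation \eqref{prop3} of Lemma \ref{lem2}, $\Dop{Y}\bot_S T_\theta(\cM)$ is equivalent to $Y\bot_S \Dop{T_\theta(\cM)}$, so (5) says $T_\theta(\cM)^{\perp_S}\subseteq(\Dop{T_\theta(\cM)})^{\perp_S}$; taking orthocomplements for the positive-definite SLD form on $\lofhh$ gives $\Dop{T_\theta(\cM)}\subseteq T_\theta(\cM)$, which is (1).

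It remains to join the D-invariance cluster (1),(2),(5) to the coincidence cluster (3),(4). For (4)$\Rightarrow$(1), I would apply $(I+\I\cD_{\rho_\theta})$ to the RLD dual basis, use \eqref{prop1} termwise to get $(I+\I\cD_{\rho_\theta})(\tilde{L}_\theta^i)=\sum_j\tilde{g}_\theta^{ji}L_{\theta,j}$, and then substitute $\tilde{L}_\theta^i=L_\theta^i$ to obtain $\Dop{L_\theta^i}=\sum_j(-\I)(\tilde{g}_\theta^{ji}-g_\theta^{ji})L_{\theta,j}$; this is a complex combination of hermitian operators that is itself hermitian (since $\cD_{\rho_\theta}$ preserves hermiticity, as checked directly from \eqref{defDop}), hence lies in $T_\theta(\cM)$, giving D-invariance. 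The reverse (1)$\Rightarrow$(4) is the crux. Assuming (2), a direct computation combining \eqref{prop1} and \eqref{prop5} yields $(I+\I\cD_{\rho_\theta})(m_\theta^i)=\sum_j(z_\theta^{ji}-\tilde{g}_\theta^{ji})L_{\theta,j}$; since $L_{\theta,j}=(I+\I\cD_{\rho_\theta})(\tilde{L}_{\theta,j})$, the right-hand side is $(I+\I\cD_{\rho_\theta})$ applied to an element of $\tilde{T}_\theta(\cM)$. Here I would invoke invertibility of $I+\I\cD_{\rho_\theta}$ on $\lofh$: because $\cD_{\rho_\theta}$ is skew-adjoint for the (full-rank, hence positive-definite) SLD inner product its spectrum is purely imaginary, and a one-line computation from \eqref{defDop} shows that $\Dop{X}=\pm\I X$ forces $\rho_\theta X=0$, i.e. $X=0$. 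Invertibility then gives $m_\theta^i\in\tilde{T}_\theta(\cM)$, while Lemma \ref{lem3} gives $m_\theta^i\bot_R\tilde{T}_\theta(\cM)$, so $\rldin{m_\theta^i}{m_\theta^i}=0$ and positive-definiteness of the RLD inner product forces $m_\theta^i=0$, which is (4). Together these implications close the loop and render (1)--(5) equivalent.

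The main obstacle is precisely the implication (1)$\Rightarrow$(4): it is the only step where the SLD and RLD structures must be played against one another rather than manipulated within one metric, and it rests on two auxiliary facts that each need their own short justification, namely the invertibility of $I+\I\cD_{\rho_\theta}$ and the positive-definiteness of the RLD inner product, both of which lean on the standing full-rank assumption on $\rho_\theta$.
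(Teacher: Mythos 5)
Your proof is correct, but there is nothing in the paper to measure it against: Theorem \ref{lem6} is stated as a summary of results established elsewhere (Ref.~\cite{jsJMP}), and this manuscript contains no proof of it. What you have produced is a self-contained derivation using only the paper's Lemmas \ref{lem2}--\ref{lem5}, and every step checks out. The equivalence (3)$\lra$(4) is indeed exactly the equality case of Lemma \ref{lem4}; the coefficient extraction in (1)$\lra$(2) via Eq.~\eqref{prop5} is right, and it tacitly uses that $\Im Z_\theta$ is a real matrix (because $Z_\theta$ is hermitian), so the combination lands in the real span $T_\theta(\cM)$ --- worth saying explicitly; the orthocomplement argument for (1)$\lra$(5) correctly combines the identity from the proof of Lemma \ref{lem5} with the skew-adjointness \eqref{prop3} and the positive-definiteness of $\sldin{\cdot}{\cdot}$ on $\lofhh$. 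For the crux (1)$\Rightarrow$(4), your invertibility argument for $I+\I\cD_{\rho_\theta}$ is sound: $\I\cD_{\rho_\theta}$ is self-adjoint for the SLD inner product on $\lofh$, so the only obstruction is the eigenvalue equation $\Dop{X}=\I X$, which by \eqref{defDop} forces $\rho_\theta X=0$ and hence $X=0$ for full-rank states. A slightly quicker route to the same fact is to observe that $Y=(I+\I\cD_{\rho_\theta})(X)$ implies $\rho_\theta Y+Y\rho_\theta=2\rho_\theta X$, so the inverse is exhibited outright as $X=\frac12\left(Y+\rho_\theta^{-1}Y\rho_\theta\right)$. Once $m_\theta^i=L_\theta^i-\tilde{L}_\theta^i$ is shown to lie in $\tilde{T}_\theta(\cM)$, Lemma \ref{lem3} and positive-definiteness of $\rldin{\cdot}{\cdot}$ finish the job; note that at that point you also obtain condition (3) directly, since the vanishing combination $\sum_j(z_\theta^{ji}-\tilde{g}_\theta^{ji})\tilde{L}_{\theta,j}=m_\theta^i=0$ forces $z_\theta^{ji}=\tilde{g}_\theta^{ji}$ by linear independence of the RLD operators.
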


\subsubsection{Asymptotically classical model}
With this notion of the asymptotically classical model, we have the following result. 
\begin{theorem}\label{thm3}
For a regular quantum statistical model, the following equivalences hold:
\begin{enumerate}
\item $\cM$ is asymptotically classical. \label{ac1}
\item $\exists W_0>0$, $C_\theta^H[W_0]=C_\theta^S[W_0]$. \label{ac2}
\item $Z_\theta=G_\theta^{-1}\ (\lra\Im Z_\theta=0)$. \label{ac3}
\item $\forall i,j$, $\tr{\rho_\theta[L_{\theta,i},L_{\theta,j}]}=0$. \label{ac4}
\end{enumerate}
\end{theorem}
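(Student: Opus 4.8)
The plan is to establish the cycle $\ref{ac1}\Rightarrow\ref{ac2}\Rightarrow\ref{ac3}\Rightarrow\ref{ac1}$ together with the separate equivalence $\ref{ac3}\Leftrightarrow\ref{ac4}$. The implication $\ref{ac1}\Rightarrow\ref{ac2}$ is immediate, since asymptotic classicality requires $C_\theta^H[W]=C_\theta^S[W]$ at every $W>0$ and hence at one particular $W_0>0$. The parenthetical remark $Z_\theta=G_\theta^{-1}\lra\Im Z_\theta=0$ in \ref{ac3} is also automatic, because $\Re Z_\theta=G_\theta^{-1}$ holds identically.

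The computational device that drives everything is that the SLD dual operators $\vec{L}_\theta=(L_\theta^1,\dots,L_\theta^n)$ already lie in $\cX_\theta$: the constraints $\tr{\rho_\theta L_\theta^i}=0$ and $\sldin{L_{\theta,i}}{L_\theta^j}=\delta_i^j$ hold by construction of the dual basis. For this choice $H_\theta[\vec{L}_\theta]=Z_\theta$, so $\Re H_\theta[\vec{L}_\theta]=G_\theta^{-1}$ and $\Im H_\theta[\vec{L}_\theta]=\Im Z_\theta$, giving $h_\theta[\vec{L}_\theta|W]=C_\theta^S[W]+\Tr{|W^{\frac12}\Im Z_\theta W^{\frac12}|}$. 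Combined with the general lower bound $C_\theta^H[W]\ge C_\theta^S[W]$, this settles $\ref{ac3}\Rightarrow\ref{ac1}$ at once: if $\Im Z_\theta=0$ then $C_\theta^H[W]\le h_\theta[\vec{L}_\theta|W]=C_\theta^S[W]$ for every $W$, forcing equality.

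The hard direction, which I expect to be the main obstacle, is $\ref{ac2}\Rightarrow\ref{ac3}$: equality at a single $W_0$ must propagate to a structural identity. Here I would use the decomposition forced by the constraints. Any $\vec{X}\in\cX_\theta$ obeys $\sldin{L_{\theta,i}}{X^j}=\delta_i^j=\sldin{L_{\theta,i}}{L_\theta^j}$, so writing $X^i=L_\theta^i+Y^i$ the remainder $Y^i$ is SLD-orthogonal to $T_\theta(\cM)$. Since the $X^i$ are hermite, $\Re H_\theta[\vec{X}]=[\sldin{X^i}{X^j}]=G_\theta^{-1}+[\sldin{Y^i}{Y^j}]$, the last term a positive semi-definite Gram matrix, whence $h_\theta[\vec{X}|W_0]=C_\theta^S[W_0]+\Tr{W_0[\sldin{Y^i}{Y^j}]}+\Tr{|W_0^{\frac12}\Im H_\theta[\vec{X}]W_0^{\frac12}|}$ is $C_\theta^S[W_0]$ plus two non-negative terms. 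Because $h_\theta[\cdot|W_0]$ is continuous and coercive on the closed affine set $\cX_\theta$ inside the finite-dimensional space of hermite $n$-tuples, the infimum is attained at some $\vec{X}_*$. If $C_\theta^H[W_0]=C_\theta^S[W_0]$, both extra terms vanish at $\vec{X}_*$; with $W_0>0$ the first forces $[\sldin{Y_*^i}{Y_*^j}]=0$, hence $Y_*^i=0$ by full rank of $\rho_\theta$, so $\vec{X}_*=\vec{L}_\theta$ and the second term equals $\Tr{|W_0^{\frac12}\Im Z_\theta W_0^{\frac12}|}=0$, which by $W_0>0$ gives $\Im Z_\theta=0$. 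The elegance here is that the real-part penalty prevents one from cheaply killing the imaginary part by a nonzero $Y$, so the optimizer is pinned to $\vec{L}_\theta$.

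Finally $\ref{ac3}\Leftrightarrow\ref{ac4}$ is a direct calculation. Hermiticity $(z_\theta^{ij})^*=z_\theta^{ji}$ gives $\Im z_\theta^{ij}=\frac{1}{2\I}(z_\theta^{ij}-z_\theta^{ji})=-\frac{1}{2\I}\tr{\rho_\theta[L_\theta^i,L_\theta^j]}$, and expanding $L_\theta^i=\sum_k g_\theta^{ki}L_{\theta,k}$ yields $\tr{\rho_\theta[L_\theta^i,L_\theta^j]}=\sum_{k,l}g_\theta^{ki}g_\theta^{lj}\tr{\rho_\theta[L_{\theta,k},L_{\theta,l}]}$. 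Since $G_\theta^{-1}=[g_\theta^{ij}]$ is invertible, the vanishing of all $\Im z_\theta^{ij}$ is equivalent to the vanishing of all $\tr{\rho_\theta[L_{\theta,i},L_{\theta,j}]}$, i.e. condition \ref{ac4}. The only delicate point in the whole argument is the attainment-and-rigidity step in $\ref{ac2}\Rightarrow\ref{ac3}$; everything else is bookkeeping with the dual basis and the lemmas of Sec.~\ref{sec2-3}.
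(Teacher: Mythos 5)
Your proposal is correct and follows essentially the same route as the paper: $\vec{L}_\theta\in\cX_\theta$ gives $h_\theta[\vec{L}_\theta|W]=C_\theta^S[W]+\Tr{|W^{1/2}\Im Z_\theta W^{1/2}|}$ for \ref{ac3}$\Rightarrow$\ref{ac1}, and the decomposition $X^i=L_\theta^i+Y^i$ with $Y^i\perp T_\theta(\cM)$ splits the Holevo function into $C_\theta^S$ plus two non-negative terms whose simultaneous vanishing forces $\vec{Y}=0$ and $\Im Z_\theta=0$, exactly as in the paper's contrapositive argument for \ref{ac2}$\Rightarrow$\ref{ac3}. Your explicit coercivity/attainment step and the spelled-out algebra for \ref{ac3}$\Leftrightarrow$\ref{ac4} are minor refinements of points the paper leaves implicit, not a different method.
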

We note that the equivalence among conditions \ref{ac1}, \ref{ac3}, and \ref{ac4} were 
presented in Ref.~\cite{jsqit32}. 
Equivalence between the first and the last conditions (\ref{ac1} and \ref{ac4}) was 
independently proven in Ref.~\cite{RJDD16}, in which the authors named the 
``compatibility condition.'' 
The last condition \ref{ac2} was suggested by Nagaoka \cite{nagaoka_com}. 

\subsubsection{$G_\theta^{-1},\tilde{G}_\theta^{-1},Z_\theta,\tilde{Z}_\theta$ matrices}
Combining the previous lemmas and theorems with additional analysis, 
we can obtain another interesting characterizations of quantum statistical models 
based on the four hermite matrices, $G_\theta^{-1},\tilde{G}_\theta^{-1},Z_\theta,\tilde{Z}_\theta$. 
This is summarized in the next corollary. 
\begin{corollary} \label{cor1}
Given a quantum statistical model, we have the following equivalences.
\begin{enumerate}
\item $\cM$ is classical. \\ \label{cond_cl}
$\lra$ $G_\theta^{-1}=\tilde{G}_\theta^{-1}$ $\lra$  $\tilde{G}_\theta^{-1}=\tilde{Z}_\theta$ $\lra$  $Z_\theta=\tilde{Z}_\theta$
\item $\cM$ is D-invariant. \\ \label{cond_Dinv}
$\lra$ $\tilde{G}_\theta^{-1}=Z_\theta$ $\lra$  ${G}_\theta^{-1}=\tilde{Z}_\theta$
\item $\cM$ is asymptotically classical. \\ \label{cond_ac}
$\lra$ $G_\theta^{-1}=Z_\theta$ 
\end{enumerate}
\end{corollary}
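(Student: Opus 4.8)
The plan is to read off all three chains directly from the matrix identities already packaged in Lemma~\ref{lem4} together with the characterizations in Theorems~\ref{thm4}, \ref{lem6}, and~\ref{thm3}. The organizing remark is that every one of the four matrices is controlled relative to $G_\theta^{-1}$: since $\Re Z_\theta=G_\theta^{-1}$ we have $Z_\theta-G_\theta^{-1}=\I\,\Im Z_\theta$, which is Hermitian with vanishing diagonal (because $Z_\theta$ Hermitian forces $\Im z_\theta^{ii}=0$); and from the proof of Lemma~\ref{lem4} both $\tilde Z_\theta-G_\theta^{-1}\ge0$ and $Z_\theta-\tilde G_\theta^{-1}\ge0$, each vanishing precisely when $L_\theta^i=\tilde L_\theta^i$ for all $i$. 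The one extra elementary input I will use is that a positive semi-definite Hermitian matrix with zero diagonal is the zero matrix (a zero diagonal entry of a positive semi-definite matrix kills the corresponding row and column).

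Items \ref{cond_ac} and \ref{cond_Dinv} are essentially restatements of earlier results. For item~\ref{cond_ac}, $G_\theta^{-1}=Z_\theta$ is exactly condition~\ref{ac3} of Theorem~\ref{thm3}. For item~\ref{cond_Dinv}, the equivalence $\tilde G_\theta^{-1}=Z_\theta\Leftrightarrow\cM$ D-invariant is condition~3 of Theorem~\ref{lem6}, while the remaining equivalence $G_\theta^{-1}=\tilde Z_\theta\Leftrightarrow\cM$ D-invariant follows from $\tilde Z_\theta-G_\theta^{-1}\ge0$ of Lemma~\ref{lem4}: this difference vanishes iff $L_\theta^i=\tilde L_\theta^i$ for all $i$, which is condition~4 of Theorem~\ref{lem6}.

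For item~\ref{cond_cl} the forward implication is cheap once items~\ref{cond_Dinv} and~\ref{cond_ac} are available: if $\cM$ is classical then by condition~\ref{eq8} of Theorem~\ref{thm4} it is both D-invariant and asymptotically classical, so $\tilde G_\theta^{-1}=Z_\theta=\tilde Z_\theta^{\phantom{-1}}\!=G_\theta^{-1}$ and all four matrices coincide, yielding each of the three stated equalities at once. The converses I handle one equality at a time. First, $G_\theta^{-1}=\tilde G_\theta^{-1}$ is equivalent, after inversion, to $G_\theta=\tilde G_\theta$, which is condition~\ref{eq4} of Theorem~\ref{thm4} and hence classicality. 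Next, if $Z_\theta=\tilde Z_\theta$ then $\tilde Z_\theta-G_\theta^{-1}=Z_\theta-G_\theta^{-1}=\I\,\Im Z_\theta$; the left-hand side is positive semi-definite and the right-hand side has zero diagonal, so both vanish, giving $L_\theta^i=\tilde L_\theta^i$ for all $i$ and $\Im Z_\theta=0$. Then $Z_\theta=\tilde G_\theta^{-1}$, and with $Z_\theta=\tilde Z_\theta=G_\theta^{-1}$ this forces $G_\theta^{-1}=\tilde G_\theta^{-1}$, hence classicality.

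The remaining converse, $\tilde G_\theta^{-1}=\tilde Z_\theta\Rightarrow\cM$ classical, is the subtlest because it needs both positive semi-definite blocks simultaneously. Adding the two Lemma~\ref{lem4} differences and using $\tilde Z_\theta-\tilde G_\theta^{-1}=0$ to telescope,
\[
(\tilde Z_\theta-G_\theta^{-1})+(Z_\theta-\tilde G_\theta^{-1})=Z_\theta-G_\theta^{-1}=\I\,\Im Z_\theta .
\]
The left side is a sum of positive semi-definite matrices, while the right side has zero diagonal; therefore the sum is zero, so each summand is zero, giving $L_\theta^i=\tilde L_\theta^i$ for all $i$ and in particular $\tilde Z_\theta=G_\theta^{-1}$. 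Combined with the hypothesis $\tilde G_\theta^{-1}=\tilde Z_\theta$ this yields $G_\theta^{-1}=\tilde G_\theta^{-1}$, hence classicality by Theorem~\ref{thm4}. I expect the main obstacle to be exactly this bookkeeping---keeping straight which difference is positive semi-definite and which is purely imaginary with zero diagonal; once the decompositions relative to $G_\theta^{-1}$ are written down, the ``positive semi-definite plus zero diagonal implies zero'' collapse does all the work, and the only genuinely model-theoretic inputs are the equality condition of Lemma~\ref{lem4} and conditions~\ref{eq4}/\ref{eq8} of Theorem~\ref{thm4}.
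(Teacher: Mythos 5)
Your proof is correct, and for most of the corollary it follows the paper's own route: items \ref{cond_Dinv} and \ref{cond_ac} are read off from Theorem \ref{lem6}, Theorem \ref{thm3}, and the equality condition of Lemma \ref{lem4} exactly as the paper does, and your handling of $Z_\theta=\tilde{Z}_\theta$ is the same observation the paper makes (that $Z_\theta\ge G_\theta^{-1}=\Re Z_\theta$ forces $\Im Z_\theta=0$), merely phrased via the zero-diagonal criterion for positive semi-definite matrices. The genuine divergence is the equivalence $\tilde{G}_\theta^{-1}=\tilde{Z}_\theta\lra\cM$ classical. The paper proves the forward direction from $\Dop{\tilde{L}_\theta^i}=0$ together with Eq.~\eqref{prop7} of Lemma \ref{lem5}, and the converse from the first inequality of Lemma \ref{lem4-2}; you instead get the forward direction for free from condition \ref{eq8} of Theorem \ref{thm4} (classical implies D-invariant and asymptotically classical, so all four matrices collapse), and for the converse you telescope the two positive semi-definite differences $\tilde{Z}_\theta-G_\theta^{-1}$ and $Z_\theta-\tilde{G}_\theta^{-1}$ into $\I\,\Im Z_\theta$ and invoke the zero-diagonal collapse to kill both summands at once. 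Your route is more economical, bypassing Lemma \ref{lem4-2} and Lemma \ref{lem5} entirely, and it is arguably tighter than the paper's converse, whose displayed chain only establishes that $\forall i,\ L_{\theta,i}=\tilde{L}_{\theta,i}$ is \emph{equivalent to} $G_\theta=\tilde{G}_\theta$ under the hypothesis rather than deriving either outright; what the paper's version buys in exchange is an explicit link between this characterization and the commutation-operator condition \ref{eq7}. The only inputs you rely on --- Hermiticity of $Z_\theta$ and $\tilde{Z}_\theta$, $\Re Z_\theta=G_\theta^{-1}$, and the shared equality condition in Lemma \ref{lem4} --- are all supplied by the paper, so no gap remains.
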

Figure \ref{fig2} gives a schematic diagram summarizing the relations among the matrices $G_\theta^{-1},\tilde{G}_\theta^{-1},Z_\theta,\tilde{Z}_\theta$. 
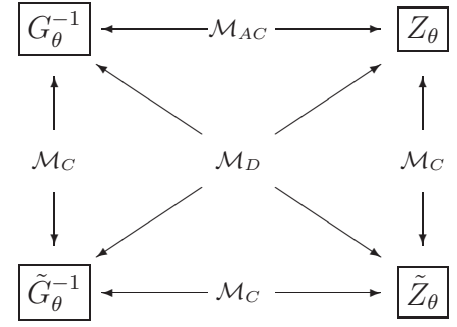
\begin{figure}
\setlength{\unitlength}{0.7mm}
\begin{picture}(80,60)(-10,-5)
\linethickness{0.35pt}
\put(29,0){\vector(-1,0){20}}
\put(42,0){\vector(1,0){20}}
\put(29,50){\vector(-1,0){20}}
\put(42,50){\vector(1,0){20}}
\put(0,19){\vector(0,-1){10}}
\put(0,31){\vector(0,1){10}}
\put(70,19){\vector(0,-1){10}}
\put(70,31){\vector(0,1){10}}
\put(42,30){\vector(3,2){20}}
\put(42,20){\vector(3,-2){20}}
\put(28,30){\vector(-3,2){20}}
\put(28,20){\vector(-3,-2){20}}
\put(0,0){\makebox(0,0)[c]{\large\fbox{$\tilde{G}_\theta^{-1}$}}}
\put(0,50){\makebox(0,0)[c]{\large\fbox{$G_\theta^{-1}$}}}
\put(70,50){\makebox(0,0)[c]{\large\fbox{$Z_\theta$}}}
\put(70,0){\makebox(0,0)[c]{\large\fbox{$\tilde{Z}_\theta$}}}
\put(35,25){\makebox(0,0)[c]{$\cM_{D}$}}
\put(0,25){\makebox(0,0)[c]{$\cM_{C}$}}
\put(70,25){\makebox(0,0)[c]{$\cM_{C}$}}
\put(35,0){\makebox(0,0)[c]{$\cM_{C}$}}
\put(35,50){\makebox(0,0)[c]{$\cM_{AC}$}}
\end{picture}
\caption{A schematic diagram for model classification for three classes: the classical ($\cM_{C}$), D-invarinat ($\cM_{D}$),  and 
asymptotically classical ($\cM_{AC}$) in terms of four matrices $G_\theta^{-1},\tilde{G}_\theta^{-1},Z_\theta,\tilde{Z}_\theta$. 
Two arrows in the opposite direction indicate if two matrices are identical, 
then a model belongs to a class indicated between these arrows.}
\label{fig2}
\end{figure}

\subsection{Discussion on Theorem \ref{thm4}}\label{sec4-2}
In this subsection, we discuss the meaning and its statistical consequences of Theorem \ref{thm4}. 
\subsubsection{Tangent vector}
We first note that two conditions \ref{eq2} and \ref{eq3} are 
nothing but condition \eqref{cond2}. This is straightforward 
to understand if we regard $\del \rho_\theta/\del\theta^i$ as 
an m-representation of the tangent vector $\del/\del\theta^i$ 
and $L_{\theta,i}$ as an e-representation of it with respect to the SLD Fisher metric. 
The statement applies for the RLD case. 

\subsubsection{Quantum Fisher information}
Condition \ref{eq4} states that two quantum Fisher information matrices are identical. 
If this is the case, in fact, all possible monotone metric on $\sofh$ are identical. 
In other words, they collapse to the single monotone metric. 
This is due to the facts that 1) the imaginary part of the RLD Fisher information 
vanishes, and 2) the SLD Fisher is the minimum and the real RLD Fisher is the maximum 
monotone metric (Petz's theorem) \cite{petz}. 

We note that this result, equivalence between condition 1 and condition 4, was also 
stated in Ref.~\cite{matsumoto_thesis}. 

Next, we can contrast condition \ref{eq5} to the condition for a D-invariant model in Lemma \ref{lem1}: 
$L_\theta^i=\tilde{L}_\theta^i$ for all $i$. This latter condition is not equivalent to 
$G_\theta=\tilde{G}_\theta$ in general unless the imaginary part of the RLD Fisher information vanishes. 
Thus, condition \ref{eq5} is a stronger condition than the condition for the D-invariant model as should be. 

\subsubsection{Tangent space}
Condition \ref{eq6} (or \ref{eq7}) means that the SLD tangent space is in 
the kernel of the commutation operator $\cD$. 
We split the SLD operator into two parts; a classical part and quantum part 
where the latter is defined by the change in a unitary direction.  
Since the $\cD$ operator maps the commutation relationship 
to the anti-commutation relationship as in Eq.~\eqref{defDop}, 
the quantum part of the SLD operator is expressed in 
terms of the commutation operator. With more analysis, 
we can show that the condition for the classical model 
is equivalent to vanishing of the quantum part of SLD operators. 
See also discussion given in Ch.~7 of the book \cite{ANbook}. 

\subsubsection{Asymptotic bound}
The last equivalent condition is a rather straightforward consequence 
once we combining all ingredients presented in the lemmas and other equivalent conditions 
for the classical model. However, the statistical implication of this condition is 
non-trivial in the sense that we only consider properties of 
asymptotically achievable bounds. One is the bound for the D-invariant model, 
and the other is the bound for the asymptotically classical model. 
Another implication of this equivalence is that there is no 
genuine quantum statistical model that is both D-invariant and asymptotically classical. 

\subsection{Proofs}\label{sec4-3}
\subsubsection{Proof for Theorem \ref{thm4}}
We give a proof for Theorem \ref{thm4}. 
As we stated before, all conditions below are about all parameter values $\theta$ 
unless otherwise stated. 

\noindent
{\it Equivalence to \ref{eq2} and \ref{eq3}:}\\
First, we note that the definition of the classical model is equivalent to 
the commutativity of $\rho_\theta$ for all different values $\theta$, that is,  
\be\label{cond1}
[\rho_\theta\,,\,\rho_{\theta'}]=0\mbox{ for all }\theta\neq\theta'.
\ee
By the standard matrix analysis, this is equivalent to: 
\be\label{cond2}
\forall i,\ [\frac{\del}{\del\theta^i}\rho_{\theta}\,,\,\rho_\theta]=0.  
\ee
From the definitions of the SLD and RLD operators, 
we can show that condition \eqref{cond2} is equivalent to 
$[L_{\theta,i}\,,\,\rho_\theta]=0$ for all $i$. This is condition \ref{eq2}. 
Similarly, condition \eqref{cond2} can be converted to 
$[\tilde{L}_{\theta,i}\,,\,\rho_\theta]=0$ for all $i$, 
which is condition \ref{eq3}. $\square$

\noindent
{\it Equivalence to \ref{eq4} and \ref{eq5}:}\\
If the model is classical, the SLD operator $L_{\theta,i}$ commutes with 
the quantum state. Hence, operator equations \eqref{sldrldop} defining 
the SLD and RLD operators are identical. Since the SLD and RLD operator 
are uniquely defined, we obtain ${L}_{\theta,i}=\tilde{L}_{\theta,i}$ for all $i$. 

Next, assume condition \ref{eq5}, then matrices $\tilde{G}_\theta$ and $Z_\theta^{-1}$ 
are identical. Noting $\Re Z_\theta^{-1}=G_\theta$, we get condition \ref{eq4}. 

Last, suppose condition \ref{eq4}, $G_\theta=\tilde{G}_\theta$, then 
from Lemma \ref{lem3}, this is possible if and only if 
$\Im Z_\theta=0$ and $L_\theta^i=\tilde{L}_\theta^i$ for all $i$. 
Since $g_{\theta,ij}=\tilde{g}_{\theta,ij}$, the latter condition 
leads to ${L}_{\theta,i}=\tilde{L}_{\theta,i}$ for all $i,j$. $\square$

\noindent
{\it Equivalence to \ref{eq6} and \ref{eq7}:}\\
Condition \ref{eq6} is to say that the SLD tangent space 
is in the kernel of the commutation operator. 
From definition of the commutation operator 
and the fact that $X\rho+\rho X=0$ implies $X=0$ if $\rho>0$,  
we have
\[
\ker \cD_{\rho_\theta}=\{X\in\lofh\,|\, [X,\rho_\theta]=0 \}. 
\]
This then immediately establishes equivalence between condition \ref{eq2} and condition \ref{eq6}. 
A similar argument applies for condition \ref{eq7}. $\square$ 

\noindent
{\it Equivalence to \ref{eq8}}: \\
When the model is classical, conditions \ref{eq4} and \ref{eq5} 
give $L_\theta^i=\tilde{L}_\theta^i$ for all $i$ (D-invariance). 
Combining it with ${L}_{\theta,i}=\tilde{L}_{\theta,i}$ leads to 
$Z_\theta=G_\theta^{-1}$. 
Hence, the definitions for D-invariant and asymptotically classical model 
are clearly satisfied, if the model is classical. 
Conversely, suppose that the model is D-invariant, $\tilde{G}_\theta^{-1}=Z_\theta$, 
and asymptotically classical, $Z_\theta=G_\theta^{-1}$. 
Then, it gives condition \ref{eq4}, $ G_\theta=\tilde{G}_\theta$. $\square$

\subsubsection{Proof for Theorem \ref{thm3}}
\begin{proof} 
First: The third condition $\Im Z_\theta=0$ implies $\forall W>0,\, C_\theta^H[W]= C_\theta^S[W]$. 
This is because of $C_\theta^H[W]\ge C_\theta^S[W], \forall W>0$ 
and the direct substitution gives \\
$h_\theta[\vec{L}_{\theta}|W]= C_\theta^S[W]+\Tr{|W^{\frac12}\Im Z_\theta W^{\frac12}|}=C_\theta^S[W]$. \\
Here $\vec{L}_{\theta}=(L^1_\theta,L^2_\theta,\dots, L^n_\theta)\in\cX_\theta$ 
is the collection of SLD dual operators. This means the set of SLD dual operators is 
the optimal achieving the lowest value in the definition of the Holevo bound \eqref{hbound}.  

By definition, the first condition obviously implies the second one: 
$\exists W_0>0$, $C_\theta^H[W_0]= C_\theta^S[W_0]$. 

To show that the existence of a weight matrix $W_0$ satisfying $C_\theta^H[W_0]= C_\theta^S[W_0]$ 
implies the vanishing of the imaginary part of the matrix $Z_\theta$, we prove the contraposition. 
That is, if $ \Im Z_\theta\neq\v0$, then $C_\theta^H[W]> C_\theta^S[W]$ holds for all weight matrices $W$. 
Let us use the following substitution for optimizing the Holevo function: 
\be
\vec{X}=(L^1_\theta,L^2_\theta,\dots, L^n_\theta)+(K^1_\theta,K^2_\theta,\dots, K^n_\theta), 
\ee
where $K^i_\theta$ ($i=1,2,\dots,n$) are tangent operators orthogonal 
to all SLD operators $L_{\theta,i}$ with respect to the SLD inner product. 
With this, the Holevo function reads
\begin{multline}
h_\theta[\vec{X}|W]=C_\theta^S[W]+\Tr{W\Re \cK_\theta}\\
+\Tr{|W^{\frac12}\Im(Z_\theta+\cK_\theta)W^{\frac12} |}, 
\end{multline}
where $n\times n$ matrix $\cK_\theta=\big[\rldin{K^i_\theta}{K^j_\theta}\big]$ is hermite. 
We note that the last two terms: \\
$\Tr{W\Re \cK_\theta}+\Tr{|W^{\frac12}\Im(Z_\theta+\cK_\theta)W^{\frac12}|}$ is strictly positive 
since it vanishes if and only if $\Re \cK_\theta=0$ and $\Im(Z_\theta+\cK_\theta)=0$ hold. 
But these two conditions cannot be satisfied due to the assumption $Z_\theta\neq  0$ 
and the positivity of the matrix $\cK_\theta$. Therefore, we show that if $\Im Z_\theta\neq0$, 
we have $C_\theta^H[W]> C_\theta^S[W]$ for all $W>0$. 
Finally, $\Im Z_\theta=0\lra \forall i,j,\ \tr{\rho_\theta[L_{\theta,i},L_{\theta,j}]}=0$ 
can be shown by elementary algebra. 
Collecting these arguments proves Theorem \ref{thm3}. 
$\square$
\end{proof}

\subsubsection{Proof for Corollary \ref{cor1}}
\begin{proof} 
{\it Equivalence in condition \ref{cond_cl}:}\\
Since $G_\theta=\tilde{G}_\theta\,\lra\, G_\theta^{-1}=\tilde{G}_\theta^{-1}$, 
the first equivalence is immediate. 

To prove the second equivalence to $\tilde{G}_\theta^{-1}=\tilde{Z}_\theta$ in \ref{cond_cl}, 
let us assume first that a model is classical. 
Condition \ref{eq7} of Theorem \ref{thm4} gives
\be
\Dop{\tilde{L}_\theta^i}=0,\ \forall i.
\ee
Then, Eq.~\eqref{prop7} of Lemma \ref{lem5} yields $\tilde{g}_\theta^{ij}-\tilde{z}_\theta^{ij}=0$ 
for all $i,j$. Conversely, if $\tilde{G}_\theta^{-1}=\tilde{Z}_\theta$ holds, 
we have the following equivalence from the first matrix inequality in Lemma \ref{lem4-2}. 
\begin{align*}
\forall i,\,L_{\theta,i}=\tilde{L}_{\theta,i}&\lra G_\theta+\tilde{G}_\theta\tilde{Z}_\theta\tilde{G}_\theta= 2\tilde{G}_\theta\\
&\lra  G_\theta+\tilde{G}_\theta\tilde{G}_\theta^{-1}\tilde{G}_\theta= 2\tilde{G}_\theta\\
&\lra G_\theta=\tilde{G}_\theta. 
\end{align*}
This proves the converse part. 

The last equivalence to $Z_\theta=\tilde{Z}_\theta$ in \ref{cond_cl} is proven as follows. 
A classical model gives this condition is straightforward. Conversely, if this condition is satisfied, 
the second matrix inequality of Lemma \ref{lem4} is then expressed as
\be
Z_\theta\ge G_\theta^{-1}. 
\ee
Noting $G_\theta^{-1}=\Re Z_\theta$, this inequality concludes 
$\Im Z_\theta=0$. (Otherwise, the matrix inequality does not hold.) 
This then shows that the model is asymptotically classical, and we have 
$G_\theta^{-1}= Z_\theta=\tilde{Z}_\theta$. The condition $G_\theta^{-1}=\tilde{Z}_\theta$ 
holds if and only if the model is D-invariant from Lemma \ref{lem4}. 
Therefore, the model is asymptotically classical and D-invariant, i.e., the classical model. 
$\square$

\noindent
{\it Equivalence in condition \ref{cond_Dinv}:}\\
The first equivalence is already proven in Theorem \ref{thm3}, whose proof is given in Ref.~\cite{jsJMP}. 
Here we note that both conditions can be proven immediately if 
we use Lemma \ref{lem4}. 
$\square$

\noindent
{\it Equivalence in condition \ref{cond_ac}:}\\
This is proven in Theorem \ref{thm3}. 
$\square$
\end{proof}

\section{Examples}\label{sec5}
\subsection{Qubit models}
When the dimension of the Hilbert space is two, i.e., a qubit system, 
we can explicitly work out classification of models. 
To analyze a given qubit model, it is convenient to use the Bloch vector 
representation of qubit states. 
Define a three dimensional real vector $\v{s}_\theta=(s_\theta^i)$ for $i=1,2,3$ by
\be
s_\theta^i:=\tr{\rho_\theta\sigma_i},
\ee
where $\sigma_i$ are the standard Pauli matrices. 
Since the mapping $\v{s}_\theta\mapsto\rho_\theta$ is bijective, 
a quantum statistical model for the qubit case can be defined as
\be \label{qmodel_bloch}
\cM=\{\v{s}_\theta\,|\, \theta\in\Theta\}. 
\ee
Based on the Bloch vector $\v{s}_\theta$, we can derive closed formulas 
for the quantum score functions (SLD and RLD logarithmic derivative operators) 
and the quantum Fisher information matrices. (See, for example, Ref.~\cite{jsJMP}.) 
In Ref.~\cite{jsJMP}, we derived the following conditions for a given model \ref{qmodel_bloch} 
to be the D-invariant and asymptotically classical.
\begin{enumerate}
\item $\cM$ is D-invariant. \\
$\lra$ $|\v{s}_\theta|$ is independent of $\theta$. 
\item $\cM$ is asymptotically classical. \\ 
$\lra$ $\del_i \v{s}_\theta\times \del_j \v{s}_\theta$ ($\forall i\neq j$) is orthogonal to $\v{s}_\theta$. 
\end{enumerate}
The equivalent condition for the D-invariant model immediately tells us 
that any unitary model on the qubit system is D-invariant. 
The converse statement is, of course, not true in general. 
For example, the following two-parameter qubit model is D-invariant, but not unitary. 
\be
\cM=\{\v{s}_\theta=(\theta^1,\theta^2,\sqrt{s_0^2-(\theta^1)^2-(\theta^2)^2})\,|\,\theta\in\Theta\}, 
\ee
where $s_0\in(0,1)$ is a fixed constant and the parameters takes values 
within the region $\Theta\subset\bbr^2$ satisfying the positivity condition for the state.  

Next, we can work out whether or not there exists a classical qubit model. 
It is straightforward to show that there cannot be any multi-parameter classical qubit model under the regularity condition, 
and thus only one-parameter classical model exists. 
The reason is simply because there can be a single parameter classical model 
embedded in a $2\times 2$ matrix space. 
Any multi-parameter classical model becomes a non-regular model. 

Finally, we ask if there can be a quasi-classical model in a qubit system. 
It turns out that there exists no such a quasi-classical qubit model. 
This is due to the fact that imposing commutativity between the SLD operators 
leads to a non-regular model. 

To prove this statement, we note the commutation condition for the SLD operators  
is expressed in terms of the Bloch vectors as 
\be
[L_{\theta,i}\,,\,L_{\theta,j}]=0\ \lra\ \del_i \v{s}_\theta\times \del_j \v{s}_\theta=0. 
\ee
Consider a two-parameter qubit model. 
The condition $\del_1 \v{s}_\theta\times \del_2 \v{s}_\theta=0$ is 
equivalent to linearly dependence of two vectors 
$\del_1 \v{s}_\theta$, $\del_2 \v{s}_\theta$. 
This then implies the existence of a function $c:\Theta\to\bbr$ such that 
$L_{\theta,1}=c(\theta)L_{\theta,2}$ holds. 
This contradicts with linearly independence of the tangent vectors. 
Note that, if this is the case, the dimension of the tangent space is one rather than two. 
The case of three parameter models can be checked similarly. 

\subsection{Non-classical quasi-classical model} \label{sec-non_cl}
As we mentioned earlier, there exists a quantum statistical model 
that is quasi-classical (all SLD operators commute with each other) and non-classical. 
It is straightforward to observe that such cases arise if a model is non-regular. 
For example, quantum states are not full rank. Below, we give a simple 
regular statistical model in a three-dimensional quantum system ($d=3$). 

We consider the following two-parameter model:\\
$\cM:=\{\rho_\theta\,|\,\theta=(\theta^1,\theta^2)\in\Theta\}$, 
where 
\begin{align}
\rho_\theta&:=U_{\theta^2} \Lambda_{\theta^1} U_{\theta^2}^{-1},\\
\Lambda_{\theta^1}&:=
\left(\begin{array}{ccc}\lambda(\theta^1) & 0 & 0 \\0 & c \lambda(\theta^1) & 0 \\0 & 0 & 1-(1+c)\lambda(\theta^1)\end{array}\right) ,\\
U_{\theta^2}&:=\Exp{\I\theta^2 \sigma_1}\mbox{ with }\sigma_1=\left(\begin{array}{ccc}0 & 1 & 0 \\1 & 0 & 0 \\0 & 0 & 0\end{array}\right) , 
\end{align}
where a constant $c\in\bbr$ ($c\neq1$) and smooth function $\lambda(\theta^1)$ are chosen 
arbitrary as long as the corresponding classical model for $\Lambda_{\theta^1}$ 
\[
\cM_1:=\{p_{\theta^1}=(\lambda(\theta^1),c \lambda(\theta^1),1-(1+c)\lambda(\theta^1) )|\theta^1\in\Theta_1\},
\]
satisfies $\cM_1\in\cP(3)$. 
The SLD operators are calculated as
\begin{align}
L_{\theta,1}&=
U_{\theta^2}\,\frac{\dot{\lambda}}{\lambda}\left(\begin{array}{ccc}1 &0 & 0 \\ 0& 1 & 0 \\0 & 0 & -m(\theta^1)\end{array}\right)U_{\theta^2}^{-1}\\
L_{\theta,2}&=U_{\theta^2}\,2\frac{1-c}{1+c}\left(\begin{array}{ccc}0 & -\I & 0 \\ \I & 0 & 0 \\0 & 0 & 0\end{array}\right)U_{\theta^2}^{-1}, 
\end{align}
where $\dot{\lambda}=d\lambda(\theta^1)/d\theta^1$, $m(\theta^1)=1-(1-(1+c)\lambda(\theta^1))^{-1}$. 
To have a regular quantum model, we also impose $\dot{\lambda}\neq0$ for all $\theta^1$. 
It is clear that two SLD operators commute with each others for all $\theta$. 
The RLD operators are 
\begin{align}
\tilde{L}_{\theta,1}&=L_{\theta,1}\\
\tilde{L}_{\theta,2}&=U_{\theta^2}\,\left(\begin{array}{ccc}0 & -\I (1-c) & 0 \\ \I (1-c)/c & 0 & 0 \\0 & 0 & 0\end{array}\right)U_{\theta^2}^{-1}. 
\end{align}

We can show that the SLD Fisher information matrix is diagonal and is given by
\begin{align}
g_{\theta,11}&=\frac{\dot{\lambda}}{\lambda}(2+m(\theta^1)^2),\\
g_{\theta,12}&=g_{\theta,21}=0,\\
g_{\theta,22}&=\left(2\,\frac{1-c}{1+c}\right)^2\lambda(\theta^1). 
\end{align}
Whereas the RLD Fisher information matrix is 
\begin{align}
\tilde{g}_{\theta,11}&=g_{\theta,11},\\
\tilde{g}_{\theta,12}&=\tilde{g}_{\theta,21}=0,\\
\tilde{g}_{\theta,22}&=\frac{(1-c)^2(1+c)}{c}\lambda(\theta^1). 
\end{align}
It is easy to see that $\tilde{g}_{\theta,22} \ge g_{\theta,22}$ with equality 
if and only if $c=1$, which is excluded. 
Therefore, $G_\theta\neq\tilde{G}_\theta$ holds and this model is not classical by Theorem \ref{thm4}. 

\section{Concluding remarks}\label{sec6}
We have derived classification and several equivalent characterizations of quantum parametric models 
based on the estimation error bound, the Holevo bound. 
Three classes are mainly discussed in this paper: the classical model, D-invariant model, and asymptotically classical model.  
We have also given relationships among these classes. In particular, the classical model 
can be viewed as the intersection of the D-invariant and asymptotically classical models. 
The classical model has several different interpretations based on the geometrical point of view. 
Although all conditions are related to on another, they show different side-sights on the classical model 
as a sub-model of the general quantum statistical model. 
We have also analyzed the quasi-classical model, in which all SLD operators commute with each other, 
and have shown that it is still quantum model. 

Before closing the paper, we list several open questions to be addressed. 
In this paper, we have focused on the global aspects of the quantum statistical models only. 
First extension is to analyze local properties of each class of the quantum model. 
This is important to understand their properties from the point of view of information geometry. 
In Ref.~\cite{jsJMP}, we analyzed the local properties for the D-invariant and asymptotically classical models.  
Therefore, it is interesting to see where the local classical model is a useful concept or not. 
Second, we don't know how much local properties determine the global property for a given model. 
An interesting question is then to ask whether we can characterize the class model globally by local conditions. 
Third, we have only used two different quantum Fisher information matrices, 
the SLD and RLD Firsher, together with their dual matrices $Z_\theta$ and $\tilde{Z}_\theta$. 
We expect that other families of quantum Fisher information should also give model classification and characterization. 
Last, there should other important classes for the quantum parametric models other than discussed in this paper. 
These are some of untouched questions in this paper, and should be examined in the subsequent publication. 

\section*{Acknowledgment}
The authors would like to thank Dr. S.~Ragy for providing information about 
their results prior to a publication \cite{RJDD16}. 
The work is partly supported by JSPS KAKENHI Grant Number JP17K05571.

\end{document}